\def%
\def\svgwidth{#}\input{1.pdf_tex}\end{inkfragenv}%
\def\svgwidth{#1}\input{#2.pdf_tex}\end{inkfragenv}%
\definecolor{blueblack}{rgb}{0,0,.7}
\newcommand{\emphdef}[1]{%
  \textcolor{blueblack}{%
    \textbf{\emph{#1}}%
  }%
}
\newcommand{\CC}{\ensuremath{\bar{C}}}
\newcommand{\KK}{\ensuremath{\bar{K}}}
\newcommand{\ZZ}{\ensuremath{\mathbb{Z}}}
\newcommand{\set}[1]{\ensuremath{\{#1\}}}
\theoremstyle{plain}  
\newtheorem{theorem}{Theorem}[section]
\newtheorem{lemma}[theorem]{Lemma}
\newtheorem{proposition}[theorem]{Proposition}
\theoremstyle{definition}
\newtheorem{definition}[theorem]{Definition}
\title{Multicuts in Planar and Bounded-Genus Graphs\\ with Bounded Number of
  Terminals\thanks{Work supported by the French ANR Blanc project
    ANR-12-BS02-005 (RDAM).}%
}
\author{\'Eric Colin de Verdi\`ere\thanks{CNRS, Laboratoire d'Informatique
    Gaspard Monge, Marne-la-Vall\'ee, France.
    \texttt{eric.colindeverdiere}\texttt{@u-pem.fr}.  Part of this work was
    done while the author was at CNRS, D\'epartement d'informatique,
    \'Ecole normale sup\'erieure, Paris, France.}}
\date{\today}
\begin{document}

\maketitle

\begin{abstract}
  Given an undirected, edge-weighted graph $G$ together with pairs of
  vertices, called pairs of \emph{terminals}, the \emph{minimum multicut
    problem} asks for a minimum-weight set of edges such that, after
  deleting these edges, the two terminals of each pair belong to different
  connected components of the graph.  Relying on topological techniques, we
  provide a polynomial-time algorithm for this problem in the case where
  $G$ is embedded on a fixed surface of genus~$g$ (e.g., when $G$ is
  planar) and has a fixed number~$t$ of terminals.  The running time is a
  polynomial of degree~$O\big(\sqrt{g^2+gt}\big)$ in the input size.

  In the planar case, our result corrects an error in an extended abstract
  by Bentz [Int.\ Workshop on Parameterized and Exact Computation,
  109--119, 2012].  The minimum multicut problem is also a generalization
  of the \emph{multiway cut problem}, a.k.a.\ \emph{multiterminal cut
    problem}; even for this special case, no dedicated algorithm was known
  for graphs embedded on surfaces.
\end{abstract}

\section{Introduction}\label{S:intro}

The minimum cut problem is one of the most fundamental problems in
combinatorial optimization (see Schrijver~\cite{s-hcot-05} for a
fascinating historical account), originally introduced in relation to
railway transshipment problems during the cold war~\cite{hr-fmern-55}.  In
this context, the railway network is modeled by a planar graph, each edge
having a weight (its capacity), and the goal is to compute the
minimum-weight set of edges that need to be removed to disconnect two given
vertices of the network, the source and destination for a given commodity.
While countless generalizations of this problem have been studied, we are
interested here in two natural extensions:
\begin{enumerate}
\item What if there are several commodities, corresponding to different
  source and destination pairs?  In other words, we are studying an instance
  of the minimum multicut problem: Given several pairs of source and
  destination vertices, how to find a minimum-weight set of edges to
  disconnect every destination vertex from its corresponding source?
\item What if the network is not planar, but includes a few tunnels and
  bridges?  In other words, what happens if the graph is embedded, not in
  the plane, but on some fixed surface?
\end{enumerate}

More formally, let $G=(V,E)$ be an undirected graph.  Furthermore, let $T$
be a subset of vertices of~$G$, called \emphdef{terminals}, and let $R$ be
a set of unordered pairs of vertices in~$T$, called \emphdef{terminal
  pairs}.  A subset $E'$ of~$E$ is a \emphdef{multicut} (with respect
to~$(T,R)$) if for every terminal pair $\set{t_1,t_2}\in R$, the vertices $t_1$
and~$t_2$ are in different connected components of the graph $(V,E\setminus
E')$.  In the \emphdef{minimum multicut problem} (also known as the
\emph{minimum multiterminal cut problem}), we assume in addition that $G$
is positively edge-weighted, and the goal is to find a multicut of minimum
weight.  We prove that this problem is polynomial-time solvable if $G$ is
embedded on a fixed surface and the number~$t$ of terminals is fixed.  More
precisely:
\begin{theorem}\label{T:main}
  Assume that $G$ is cellularly embedded on a surface (orientable or not)
  of Euler genus~$g$.  Then the minimum multicut problem can be solved in
  ${(g+t)^{O(g+t)}}\*{n^{O\big(\sqrt{g^2+gt}\big)}}$ time, where $t=|T|$ is
  the number of terminals and $n$~is the number of edges of~$G$.
\end{theorem}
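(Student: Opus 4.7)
The plan is to reduce the minimum multicut problem to a bounded number of ``topologically typed'' shortest-cut subproblems, each solvable on the bounded-genus graph by a separator-based dynamic program.

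First, I would puncture the surface at every terminal: topologically, replace each terminal vertex $t_i$ by a small boundary circle $b_i$, obtaining a surface $\Sigma$ of the same Euler genus $g$ but with $t$ boundary components. Via the natural duality for surface-embedded graphs, a multicut $E'\subseteq E$ corresponds to a subgraph $H$ drawn in $\Sigma$ such that, for every terminal pair $\set{t_i,t_j}\in R$, the boundaries $b_i$ and $b_j$ lie in distinct connected components of $\Sigma\setminus H$. The weight of $E'$ equals the total dual weight of $H$.

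Next, I would place $H$ into a normal form: iteratively delete degree-$\leq 1$ vertices and contract edges bounding faces that contain no~$b_i$, without increasing weight nor breaking the separation requirement. In the resulting reduced $H$, every vertex has degree at least~$3$ and every face contains at least one boundary $b_i$. Applying Euler's formula to $(\Sigma,H)$ then gives $|V(H)|,|E(H)|=O(g+t)$, so the abstract graph $H$, its rotation system in $\Sigma$, and the assignment of the $t$ boundaries to faces have altogether $(g+t)^{O(g+t)}$ possible combinatorial types. I would enumerate all such \emph{topological types} $\tau$ compatible with $R$, and for each solve the minimum-weight realization problem: find a primal subgraph whose dual is isotopic in $\Sigma$ to $\tau$.

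The last step is where the $n^{O(\sqrt{g^2+gt})}$ factor enters. Each of the $O(g+t)$ edges of $\tau$ must be realized by a shortest primal path in a prescribed homotopy class in~$\Sigma$, and the $O(g+t)$ such strands must be coordinated so as to produce a single edge set realizing $\tau$. This joint problem can be attacked by divide-and-conquer along an $O(\sqrt{gn})$-separator for bounded-genus graphs, tracking at each level how the $O(g+t)$ strands of $\tau$ cross the separator. Balancing the recursion depth against the number of crossing configurations leads to an exponent that scales like $\sqrt{g(g+t)}=\sqrt{g^2+gt}$, as claimed. The main obstacle is the normalization and enumeration in the middle step: every simplification move used on $H$ must be realizable without increasing the weight of the associated primal multicut, and in particular the non-orientable case requires careful topological bookkeeping. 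It is precisely this bookkeeping that, per the introduction, is missed by Bentz's extended abstract in the planar case; repairing it — via the right normal form for $H$ together with a complete enumeration of the resulting types — is the key technical contribution.
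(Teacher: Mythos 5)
The high-level skeleton you describe — pass to a dual separating graph, normalize it to $O(g+t)$ complexity via Euler's formula, enumerate ``types,'' solve a shortest-realization problem per type — is the same as the paper's. But there is a genuine gap exactly at the step you flag as ``the key technical contribution'': the enumeration.

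A combinatorial type $\tau$ in your sense (abstract graph, rotation system, assignment of the boundaries $b_i$ to faces) does \emph{not} determine the isotopy class of the embedding in $\Sigma$, nor the homotopy classes of the edges of the dual graph. On a surface with $t\ge 3$ punctures there are infinitely many pairwise non-homotopic simple cycles with the same rotation data that partition the punctures the same way; this is precisely Bentz's error (Appendix~\ref{A:bentz}, Figure~\ref{F:counterex-lemma3}). So after enumerating your $(g+t)^{O(g+t)}$ types you still have, for each one, infinitely many homotopy classes to search over, and nothing in your outline bounds that. The paper closes this gap by first building a cut graph~$K$ through all terminals (Proposition~\ref{P:cutgraph}), then proving via an exchange argument that some optimal dual crosses each edge of~$K$ only $O(g+t)$ times (Lemma~\ref{L:crossingbound}). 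A ``topology'' is then the combinatorial map of the overlay of the dual graph with~$K$ inside the disk~$D$; this \emph{does} encode the crossing sequences, hence the homotopy classes, and Proposition~\ref{P:enumtop} shows there are only $(g+t)^{O(g+t)}$ of them. Without the cut graph and the crossing bound, your enumeration is incomplete, and acknowledging the need for ``careful bookkeeping'' does not supply the missing lemma.

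A secondary issue is the source of the $n^{O(\sqrt{g^2+gt})}$ exponent. You propose divide-and-conquer along an $O(\sqrt{gn})$-size separator of the \emph{input} graph while ``tracking how the $O(g+t)$ strands cross the separator''; naively this does not give the stated exponent (tracking $O(g+t)$ strands across $O(\sqrt{gn})$ separator vertices blows up). The paper instead takes a path decomposition of the \emph{abstract dual graph} $C$, which has $O(g+t)$ vertices and genus at most~$g$, hence width $O(\sqrt{g(g+t)}) = O(\sqrt{g^2+gt})$ (Lemma~\ref{L:treewidth}), and does dynamic programming over the $O((g+t)n)$ possible locations of each vertex of~$C$, with edges realized by shortest paths of prescribed crossing sequence (shortest homotopic paths after cutting along~$K$). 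That is where the exponent $\sqrt{g^2+gt}$ actually comes from.
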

This is the first polynomial-time algorithm for this purpose, even when
specialized to either the multiway cut problem (see below for details) or
the planar version.  Moreover, the $n^{O(\sqrt{t})}$ dependence in the
number of terminals is unavoidable, assuming the Exponential Time
Hypothesis~\cite{m-tlbpm-12}, even in these two special cases.

\subsection*{Comparison with Former Work}

Many instances of the \textbf{minimum multicut} problem are hard, even in
very restricted cases.  In particular, it is NP-hard in unweighted binary
trees~\cite{cfr-mugdb-03} and unweighted
stars~\cite[Theorem~3.1]{gvy-pdaai-97}, and even APX-hard in the latter
case.  In the case where the number of pairs of terminals is fixed and at
least three, Dahlhaus et al.~\cite{djpsy-cmc-94} have proved that the
problem is APX-hard in general graphs; nonetheless, it becomes
polynomial-time solvable for bounded-treewidth graphs, as proved by
Bentz~\cite[Theorem~1]{b-cmpbt-08}, and fixed parameter tractable in the
size of the solution for unweighted graphs~\cite{bdt-mf-11}.  The problem
is even harder for directed graphs~\cite{b-cmpbt-08}.

\smallskip

In the case where the graph is \textbf{planar}, the number of terminals is
fixed, and they all lie on the outer face, Bentz~\cite{b-sampg-09} has
given a polynomial-time algorithm for the minimum multicut problem.  More
recently~\cite{b-ptapm-12}, he has announced an algorithm for the same
case, but removing the condition that the terminals lie on the outer face.
Unfortunately, his proof has several flaws, leaving little hope for repair
(see Appendix~\ref{A:bentz}).  We give a faster algorithm that also works
for graphs on arbitrary surfaces.

\smallskip

A special case that is somewhat more tractable is the \textbf{multiway cut
  problem} (a.k.a.\ the \emph{multiterminal cut problem}); this is the case
where the set of pairs of terminals comprises all possible pairs of
distinct vertices in the set of terminals $T\subset V$.  In the planar
case, Dahlhaus et al.~\cite{djpsy-cmc-94} have proved that it is still
NP-hard, but Bateni et al.~\cite{bhkm-ptasp-12} have given a
polynomial-time approximation scheme.  Again in the planar case, the
problem is also polynomial-time solvable if the number of terminals is
fixed, as proved in the early 1990s~\cite{djpsy-cmc-94,h-pmcp-98}.  In
stark contrast, the complexity of the multicut problem has remained open
until now, although it is a very natural generalization of the multiway cut
problem (the multicut problem is ``dual'' to the multicommodity flow
problem, largely studied~\cite[Chapters 70--76]{s-cope-03}).

More recently, Klein and Marx have shown that the planar multiway cut
problem can be solved in $2^{O(t)}\*n^{O(\sqrt{t})}$ time (where $n$ is the
complexity of the graph)~\cite{km-spktc-12}; Marx has proved that the
$n^{O(\sqrt{t})}$-dependence is the best one could hope for, assuming the
Exponential Time Hypothesis (ETH)~\cite{m-tlbpm-12}.  Our algorithm is more
general since it deals with multicut, not multiway cut, and works on
arbitrary surfaces; its running time, for fixed genus, is
$t^{O(t)}\*n^{O(\sqrt{t})}$; while the $t^{O(t)}$ factor is slightly worse
than the $2^{O(t)}$ of Klein and Marx, the second factor is the same, and
optimal unless ETH is false.  Since approximability in the planar case is
very different for multicut and multiway cut, our result is surprising,
since it shows that, as far as exact computation is concerned, both are
(essentially) equally hard.

As a side note, it is easy to see that, to solve the multicut problem, it
suffices to guess the partition of the terminals into connected components,
and to solve a multiway cut problem in a higher genus surface.  However,
this reduction is not useful here, since we would get a worse dependence
in~$t$ if we were to prove the result for multiway cut on surfaces and use
this reduction.

\smallskip

On the other hand, graph algorithms dedicated to \textbf{graphs embedded on
  a fixed surface} have flourished during the last decade.  One reason is
that many graphs arising in geometric settings are naturally embedded on a
surface; another one is that the family of graphs embeddable on a fixed
surface is closed under minor, and such studies can be seen as a first step
towards efficient algorithms for minor-closed families.  This line of
research is also justified by the fact that testing whether a graph of
complexity~$n$ embeds on a surface of genus~$g$ can be done efficiently in
the complexity of the graph, namely, in $2^{O(g)}n$ time~\cite{m-ltaeg-99}.
However, the history of flow and cut problems for graphs embedded
on surfaces is rather short.  Chambers et al.~\cite{cen-mcshc-09} have
given an algorithm to compute a minimum cut in a graph embedded on an
orientable surface of genus~$g$ that runs in $g^{O(g)}n\log n$ time; a very
different algorithm by Erickson and Nayyeri~\cite{en-mcsnc-11} runs in
$2^{O(g)}n\log n$ time.  Thus, the minimum cut problem is solvable in
near-linear time on graphs embeddable on a fixed orientable surface.
Algorithms are also available for computing global minimum
cuts~\cite{efn-gmcse-12} and maximum flows~\cite{cen-hfcc-12} for graphs on
surfaces.  To our knowledge, we present here the first algorithm for the
minimum multicut problem (or even the multiway cut problem) for
surface-embedded graphs.  The main tool for flows and cuts on
surfaces~\cite{cen-hfcc-12,cen-mcshc-09,en-mcsnc-11,efn-gmcse-12} is
homology, which is the appropriate algebraic formalism for dealing with
graphs separating two given vertices, but it appears to be insufficient in
the multicommodity case.

\subsection*{Overview and Discussion of Proof Techniques}

The strategy for proving Theorem~\ref{T:main} is the following.  In
Section~\ref{S:cross-metric}, we first show that a multicut corresponds, in
a dual sense, to a graph drawn on~$S$ that separates all pairs of
terminals; such a graph will be called a \emph{multicut dual}.  Moreover,
if the multicut is minimum, then this multicut dual is as short as
possible, when distances on~$S$ are measured using the \emph{cross-metric}:
namely, the sum of the weights of the edges of~$G$ crossed by the multicut
dual is minimum.  The topological structure of the multicut dual can be
described suitably after we cut the surface open into a disk with all
terminals on its boundary (Section~\ref{S:cutgraph}).  We then show that
this structure is constrained (Section~\ref{S:struct}), that we can
enumerate its various possibilities (Section~\ref{S:enum}), and (roughly)
that, for each of these topologies, we can compute a shortest multicut dual
with that topology efficiently (Section~\ref{S:final}).

At a high level, our approach follows a similar pattern to Klein and
Marx~\cite{km-spktc-12}, since they also rely on enumerating the various
candidate topologies for the dual solution and find the optimum solution
for each topology separately.  This strategy is also present in Dahlhaus et
al.~\cite{djpsy-cmc-94} and, in a different context, in Chambers et
al.~\cite{ccelw-scsh-08}, which was our initial source of inspiration.  The
details are, however, rather different.

Indeed, Klein and Marx~\cite{km-spktc-12} need a reduction to the
biconnected case~\cite[Section~3]{km-spktc-12}, which is shortcut in our
approach.  Also, the structural properties that we develop for the minimum
multicut problem are more involved than the known ones for the multiway cut
problem; indeed, the solution, viewed in the dual graph, has less structure
in the multicut problem than in the multiway cut problem: in the multiway
cut case, it has as many faces as terminals, and thus many cycles, whereas
for the minimum multicut problem, the possible topologies are more diverse
(e.g., an optimal solution could be a single dual cycle).

Chambers et al.~\cite{ccelw-scsh-08} have developed related techniques for
computing a shortest splitting cycle, which have been subsequently reused
for other topological and computational problems in planar or surface
cases~\cite{en-sncwp-11,cen-mcshc-09,f-sntcd-13}.  A key difference,
however, is that we extend the method to work with graphs instead of paths
or cycles, which makes the arguments more complicated.  We need to encode
precisely the locations of the vertices and edges of the multicut dual with
respect to the cut graph; the cross-metric setting is very convenient for
this purpose, since it avoids successive transformations of the input dual
graph to mimic cutting along another graph, as done by Klein and
Marx~\cite{km-spktc-12}.

Moreover, our approach also relies on other techniques from computational
topology, in particular, homology techniques developed for the single
commodity minimum cut problem~\cite{cen-mcshc-09}, homotopy techniques for
shortest homotopic paths~\cite{ce-tnpcs-10,k-csntc-06}, and treewidth
techniques for the surface case~\cite{e-dgteg-03}.

Finally, we remark that we are not aware of any significantly simpler proof
for the planar case: The construction of Section~\ref{S:cutgraph} can be
simplified, and Lemma~\ref{L:treewidth} is standard in that case, but the
overall strategy would be the same.

\section{Preliminaries}\label{S:prelim}

We recall here standard definitions on the topology of surfaces.  For
general background on topology, see for example Stillwell~\cite{s-ctcgt-93}
or Hatcher~\cite{h-at-02}.  For more specific background on the topology of
surfaces in our context, see recent articles and surveys on the same
topic~\cite{ce-tnpcs-10,c-tags-12}.

In this article, $S$ is a compact, connected \emphdef{surface} without
boundary; $g$ denotes its Euler genus.  Thus, if $S$ is orientable,
$g\geq0$ is even, and $S$ is (homeomorphic to) a sphere with $g/2$ handles
attached; if $S$ is non-orientable, $S$ is a sphere with $g\geq1$ disjoint
disks replaced by M\"obius strips.

We consider paths drawn on~$S$.  A \emphdef{path}~$p$ is a continuous map
from $[0,1]$ to~$S$; a \emphdef{loop} is a path~$p$ such that its two
endpoints coincide: $p(0)=p(1)$.  A path is \emphdef{simple} if it is
one-to-one (except, of course, that its endpoints $p(0)$ and~$p(1)$ may
coincide).  We thus emphasize that, contrary to the standard terminology in
graph theory, paths may self-intersect.  A simple loop is
\emphdef{two-sided} if it has a neighborhood homeomorphic to an annulus;
otherwise, it has a neighborhood homeomorphic to a M\"obius strip (which
implies that the surface is non-orientable), and is \emphdef{one-sided}.

All the graphs considered in this article may have loops and multiple
edges.  A \emphdef{drawing} of a graph~$G$ on~$S$ maps the vertices of~$G$
to points on~$S$ and the edges of~$G$ to paths on~$S$ whose endpoints are
the images of the incident vertices.  An \emphdef{embedding} of~$G$ is a
``crossing-free'' drawing: The images of the vertices are pairwise
distinct, and the image of each edge is a simple path intersecting the
image of no other vertex or edge, except possibly at its endpoints.  We
will mostly consider graph embeddings on~$S$.  A \emphdef{face} of an
embedding of~$G$ on~$S$ is a connected component of $S$ minus (the image
of)~$G$.  A graph is \emphdef{cellularly embedded} on~$S$ if every face of
the graph is an open disk.  A \emphdef{cut graph} of~$S$ is a graph~$G$
embedded on~$S$ whose unique face is a disk.  (In particular, a cut graph
has no isolated vertex, unless it is reduced to a single vertex and $S$ is
a sphere---this case does not occur in this paper.)  \emphdef{Euler's
  formula} states that, if $G$ is a graph cellularly embedded on~$S$ with
$v$ vertices, $e$ edges, and $f$ faces, then $v-e+f=2-g$.

Algorithmically, we can store graphs cellularly embedded on~$S$ by their
\emph{combinatorial map}, which essentially records the cyclic ordering of
the edges around each vertex; there are efficient data structures for this
purpose~\cite{e-dgteg-03,l-gem-82}.

\section{The Cross-Metric Setting}\label{S:cross-metric}

In this section, we prove that a minimum multicut corresponds, in an
appropriate sense, to a shortest graph satisfying certain properties.

We say that a graph~$H$ embedded on~$S$ is in \emphdef{general position}
with respect to our input graph~$G$ if there are finitely many intersection
points between $G$ and~$H$, and each such point corresponds to a crossing
between an edge of~$G$ and an edge of~$H$.  The \emphdef{length} of~$H$ is
the sum, over all crossing points between $G$ and~$H$, of the weight of the
corresponding edge of~$G$.  Note that an edge of~$H$ can cross an edge
of~$G$ several times, and in such cases, the length of the edge of~$H$ is
computed by taking into account the multiplicity of intersections.
In other words, $G$ is now seen as a graph that
provides a discrete (or \emph{cross-metric}) distance function
on~$S$~\cite{ce-tnpcs-10}.  Algorithmically, we can store a graph~$H$ in
general position with respect to~$G$ by recording the combinatorial map of
the \emph{overlay} of $G$ and~$H$, obtained by adding vertices at each
intersection point between $G$ and~$H$ and subdividing edges of~$G$
and~$H$.

\emph{In the following, unless noted otherwise, \textbf{all graphs drawn
    on~$\bm{S}$ will be in general position with respect to~$\bm{G}$}.
  Moreover, whenever we consider distances between two points in~$S$ (not
  lying on~$G$) or lengths of paths in~$S$, we implicitly consider them in
  the above cross-metric sense.}  In some clearly mentioned cases below
(see Proposition~\ref{P:cutgraph}), we will need to consider paths~$p$ that
are in general position with respect to~$G$, except that some of their
endpoints may lie on~$G$.  In such cases, the endpoints of~$p$ are not
taken into account for determining the length of~$p$.

A \emphdef{multicut dual} is a graph~$C$ embedded on~$S$ such that, for
every pair $\set{t_1,t_2}\in R$, the vertices $t_1$ and~$t_2$ are in different
faces of~$C$.  As the terminology suggests, we have the following
proposition, which will guide our approach.
\begin{proposition}\label{P:dual}
  Let $C$ be a shortest multicut dual.  Then the set $E'$ of edges of~$G$
  crossed at least once by~$C$ is a minimum multicut.
\end{proposition}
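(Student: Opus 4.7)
The plan is to prove two things separately: (i) that $E'$ is a multicut, and (ii) that no multicut has total weight smaller than $w(E')$.

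For (i), I would argue by contradiction. Suppose some pair $\{t_1,t_2\}\in R$ lies in a single component of $(V,E\setminus E')$, and pick a walk $\pi$ in $G$ from $t_1$ to $t_2$ using only edges of $E\setminus E'$. Viewed as a curve in $S$, $\pi$ is disjoint from $C$: its edges are not crossed by $C$ by the very definition of $E'$, and its vertices lie off $C$ by the general-position assumption. Then $t_1$ and $t_2$ would sit in the same face of $C$, contradicting that $C$ is a multicut dual.

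For (ii), the strategy is to take an arbitrary multicut $E''$ and construct from it a multicut dual $C''$ whose length equals $w(E'')$. Granting this, and using that the length of $C$ is at most the length of $C''$ by minimality, as well as the fact that the length of $C$ is at least $w(E')$ (since each edge of $E'$ contributes at least its weight), one immediately obtains $w(E')\leq w(E'')$. To build $C''$, I would pick a point $p_f$ in the interior of every face $f$ of $G$ whose boundary meets $E''$, and, for every $e\in E''$ with incident faces $f,f'$ in $G$, draw an arc of $C''$ from $p_f$ to $p_{f'}$ crossing $e$ exactly once and disjoint from the rest of $G$. Arcs inside a common face can be routed to be pairwise disjoint. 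Since each of the $|E''|$ arcs contributes precisely $w(e)$, the total length of $C''$ equals $w(E'')$.

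The substantive step, and the one I expect to be the main obstacle, is to verify that $C''$ is indeed a multicut dual. The plan is to establish a bijection between faces of $C''$ and connected components of $(V,E\setminus E'')$. Inside each face $f$ of $G$ meeting $E''$, the arcs of $C''$ incident to $p_f$ cut $f$ into wedges, one per sub-arc of $\partial f$ delimited by consecutive edges of $E''$. Each such sub-arc consists only of vertices and of edges in $E\setminus E''$, hence lies in a single component of $(V,E\setminus E'')$. Moreover, two wedges belonging to adjacent faces of $G$ and separated only by an edge of $E\setminus E''$ fall into the same face of $C''$, since $C''$ crosses no such edge. Together these observations show that the face of $C''$ containing any terminal $t$ depends only on the component of $(V,E\setminus E'')$ containing $t$; and since $E''$ is a multicut, paired terminals lie in different components and therefore in different faces of $C''$, as needed.
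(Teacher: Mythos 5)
Your proof is correct and follows essentially the same route as the paper: the set $E'$ is a multicut because any $G$-path between paired terminals must cross $C$, and minimality follows by exhibiting, for an arbitrary multicut $E''$, a multicut dual of length $w(E'')$ built as the subgraph of the dual graph $G^*$ consisting of the edges $e^*$ with $e\in E''$. The paper asserts without further justification that this dual subgraph is a multicut dual, whereas you spell out the bijection between its faces and the components of $(V,E\setminus E'')$; this is a useful elaboration but not a different argument.
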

\begin{proof}
  $E'$ is a multicut, because any path in~$G$ connecting a pair of
  terminals $\set{t_1,t_2}\in R$ must cross the multicut dual, and thus use
  one edge in~$E'$.  Moreover, the weight of~$E'$ is at most the length
  of~$C$.

  To prove that $E'$ is a multicut of minimum weight, it suffices to prove
  that, for every multicut~$E''$ with weight~$w$, there exists a multicut
  dual of length~$w$.  Consider the \emph{dual graph}~$G^*$ that has one
  vertex inside each face of~$G$ and one edge~$e^*$ crossing each edge~$e$
  of~$G$.  The subgraph of~$G^*$ with edge set $\set{e^*\mid e\in E''}$
  forms a multicut dual whose length is~$w$.
\end{proof}

As a side remark, it follows that the minimum multicut problem can be seen
as a discrete version of the following topological problem: Given a metric
surface~$S$ with boundary, and a set~$R$ of pairs of boundary components,
compute a shortest graph on~$S$ that separates every pair of boundaries
in~$R$.  We are exactly solving this problem in the realm of cross-metric
surfaces.

\section{Planarization}\label{S:cutgraph}

Our algorithm starts by computing a cut graph~$K$ of~$S$ passing through
all the terminals.  We will also need some structural properties for~$K$,
detailed in the following proposition.  If $S$ is the sphere (equivalently,
if $G$ is planar), we could take for~$K$ a shortest spanning tree of~$T$
(with respect to the cross-metric setting), which can be obtained using a
simple modification of any algorithm for computing minimum spanning trees
in graphs.  For the general case, we use a known construction, a so-called
\emph{greedy system of arcs}~\cite{ccelw-scsh-08}.  The following
proposition summarizes the properties that we will use.

\begin{proposition}\label{P:cutgraph}
  In $O(n\log n+(g+t)n)$ time, we can compute a cut graph~$K$ on~$S$, whose
  $O(g+t)$ vertices contain~$T$, and with $O(g+t)$ edges, each of which is
  a shortest path on~$S$.  Some vertices of~$K$ may lie on~$G$ (either on
  vertices or on edges).
\end{proposition}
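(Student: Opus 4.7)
The plan is to construct $K$ as a greedy system of arcs in the spirit of Chambers et al.~\cite{ccelw-scsh-08}, adapted so that every terminal of $T$ appears as a vertex.

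I would first view each terminal $t_i \in T$ as a puncture, replacing $S$ by a compact surface $\Sigma$ with $t$ boundary components and Euler characteristic $2-g-t$. A maximal collection of disjoint, interior-disjoint arcs on $\Sigma$ with endpoints on the boundary, whose complement is a single open disk, consists of $O(g+t)$ arcs (the exact count follows from applying Euler's formula to the resulting disk decomposition). Once the punctures are glued back to the terminals, such a collection becomes a cut graph of $S$ whose vertex set contains $T$, plus possibly $O(g+t)$ additional vertices at crossings between pairs of arcs; this yields the stated $O(g+t)$ bound on both vertices and edges.

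To ensure that each edge is a shortest path, I would build the arcs one at a time using the standard greedy strategy. At each step, run a cross-metric Dijkstra computation from the current set of boundary components of the surface-so-far (i.e., $\Sigma$ cut along the previously selected arcs), enumerate the candidate shortest arcs connecting two boundary components (or forming a loop on one component), and select the shortest one whose addition keeps the complement a disjoint union of surfaces that can still be completed to a single disk by future arcs. By the standard exchange argument behind greedy systems~\cite{ccelw-scsh-08}, the arc chosen at each step is a shortest path between its endpoints on the surface-so-far, and hence also on $\Sigma$ and on $S$.

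The main obstacle is justifying that the greedy choice really produces shortest-path arcs. The argument is that any strictly shorter replacement for the $i$-th arc, having the same endpoints, would either have been eligible at an earlier step (contradicting the ordering used by the greedy algorithm), or would violate the topological admissibility condition; in the latter case a suitable homotopy swap with one of the earlier arcs produces an admissible sequence of strictly smaller total length, again contradicting minimality. Finally, the running time bound follows because each of the $O(g+t)$ iterations performs one single-source shortest-path computation in the overlay of $G$ with the arcs selected so far, each taking $O(n \log n)$ time with standard heap-based Dijkstra, giving the claimed $O(n\log n + (g+t)n)$ complexity.
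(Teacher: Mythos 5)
Your high-level plan matches the paper's (remove disks around the terminals, build a system of arcs on the resulting surface with boundary, glue back), but there is a genuine gap at the heart of the argument. You claim that ``the arc chosen at each step is a shortest path between its endpoints on the surface-so-far, and hence also on~$\Sigma$ and on~$S$.'' The ``hence'' is false: cutting a surface only removes candidate paths, so distances can only increase, and an arc that is shortest in the cut-open intermediate surface need not be shortest in the original one. Indeed, the arcs produced by the greedy system of arcs of Chambers et al.\ are in general \emph{not} shortest paths on $S$; the correct structural property is that each arc is the concatenation of (at most) two shortest paths on $S$, meeting at a midpoint. The paper therefore inserts a degree-two vertex on each arc so that each resulting \emph{edge} of $K$ is a shortest path. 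This is precisely why the proposition's statement is that the edges of $K$ are shortest paths, not that the arcs of the system are; and the shortest-path property of the edges of $K$ is what later drives the exchange argument in Lemma~\ref{L:crossingbound}. Without the subdivision step, the proposition as stated is not established.

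A secondary inaccuracy: you mention ``additional vertices at crossings between pairs of arcs,'' but the arcs of a system of arcs are pairwise disjoint by construction, so no crossings occur; the extra $O(g+t)$ vertices come from the midpoints of the arcs and the endpoints on the boundary circles, not from crossings. Finally, your running-time account (one Dijkstra per arc) gives $O((g+t)n\log n)$, which does not match the stated $O(n\log n + (g+t)n)$; the paper simply invokes the bound from Chambers et al., whose construction is more careful than a naive per-arc Dijkstra.
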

\begin{proof}
  We temporarily remove a small disk containing each terminal, the boundary
  of which crosses each edge of~$G$ incident to that terminal exactly once,
  and crossing no other edge of~$G$.  This yields a surface with
  boundary~$S'$ that is naturally a cross-metric surface, because the
  intersection of the image of the graph~$G$ with~$S'$ is also a graph~$G'$
  embedded on~$S'$.  On that surface, we compute a \emph{system of
    arcs}~\cite[Section~5.1]{ccelw-scsh-08}, namely, a set of disjoint,
  simple paths with endpoints on the boundary of~$S'$ that cut~$S'$ into a
  topological disk.  Moreover, there are $O(g+t)$ paths (by Euler's
  formula), and the aforementioned construction guarantees that each path
  is roughly the ``concatenation'' of (at most) two shortest paths.  More
  precisely, each path can be split into two shortest paths by inserting a
  degree-two vertex on an intersection point between $e$ and some edge
  of~$G'$.  (It is actually a \emph{shortest} system of
  arcs~\cite[Conclusion]{c-scgsp-10}.)

  Putting back the disks containing the terminals, and extending the arcs
  slightly inside these disks to the terminals, we obtain a cut graph~$K$
  satisfying the desired properties.  The number of edges of~$K$ is still
  $O(g+t)$, and the running time is
  $O(n\log n+(g+t)n)$~\cite{ccelw-scsh-08}.
\end{proof}

At a high level, the algorithm consists in (1) enumerating all possible
``topologies'' of the multicut dual with respect to~$K$, (2) for each of
these possible topologies, computing a shortest multicut dual with that
topology, and (3) returning the overall shortest multicut dual.

\section{Structural Properties of a Shortest Multicut Dual}\label{S:struct}

In this section, we prove some structural properties of a shortest multicut
dual.

Consider all shortest multicut duals in general position with respect to
$K\cup G$.  Among all these, let $C_0$ be one that crosses~$K$ a minimum
number of times.  We can, without loss of generality, assume that $C_0$ is
inclusionwise minimal, in the sense that no edge can be removed from~$C_0$
without violating the fact that it is a multicut dual.  Of course, we can
assume that $C_0$ has no isolated vertex.  If $C_0$ has a degree-one
vertex, we can ``prune'' it, by removing it together with its incident
edge.  If $C_0$ has a degree-two vertex that is not a loop, we can
``dissolve'' it, by removing it and identifying the two incident edges.
Thus, we can assume that $C_0$ has minimum degree at least two, and that
every degree-two vertex is the vertex of a connected component that is a
loop.

\subsection{Crossing Bound}

We start with an easy consequence of Euler's formula.
\begin{lemma}\label{L:sizedual}
  $C_0$ has $O(g+t)$ vertices and edges.
\end{lemma}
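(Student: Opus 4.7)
The plan is to combine Euler's inequality $v - e + f \geq 2 - g$ (valid for any embedding on $S$, cellular or not) with the degree and minimality hypotheses on $C_0$, splitting the analysis between the loop-components and the remaining non-loop part (which has minimum degree at least three).

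The first main step is to prove that every face of $C_0$ contains at least one terminal, so $f \leq t$. Suppose some face $F$ contains none, and pick any edge $e$ on $\partial F$ (which exists provided $C_0 \neq \emptyset$, since $C_0$ has no isolated vertex; the empty case is trivial). Either $e$ has $F$ on both sides, and deleting $e$ does not alter the face decomposition; or $e$ separates $F$ from a distinct face $F'$, and deleting $e$ merges the two into a face with exactly the terminals of $F'$, hosting no new pair. Either way the multicut-dual property is preserved, contradicting inclusion-minimality. The same local argument applied to a loop of a loop-component shows that, in $C_0$, each such loop borders two \emph{distinct} faces.

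Let $c_1$ be the number of loop-components, and write $v = v' + c_1$, $e = e' + c_1$, where $v'$ and $e'$ count the non-loop part; note $v - e = v' - e'$. Since $2 e' \geq 3 v'$, combining with Euler's inequality yields $e'/3 \leq e - v \leq f + g - 2 \leq t + g - 2$, so $v', e' = O(g+t)$. To bound $c_1$, I use the additivity $\chi(S) = \chi(C_0) + \sum_F \chi(F) = (v-e) + \sum_F \chi(F)$ together with the fact that any surface with boundary $F$ satisfies $\chi(F) \leq 2 - b_F$, where $b_F$ is the number of boundary components. Rearranging gives $\sum_F b_F \leq 2f - 2 + g + (v-e)$; and since each loop-component contributes $2$ to $\sum_F b_F$ (its loop borders two distinct faces, by the previous step) while $v - e \leq 0$, one obtains $c_1 \leq t - 1 + g/2 = O(g+t)$. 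Adding up yields $v, e = O(g+t)$.

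The hardest step is the bound on $c_1$: inclusion-minimality alone permits many loop-components whose loops border the same pair of faces (parallel meridians on a torus is a concrete example), so the genus must be actively used, via the Euler-characteristic bookkeeping of the face topologies.
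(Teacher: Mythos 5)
Your proof is correct, and it handles the possibly non-cellular embedding by a genuinely different route than the paper's. Both proofs begin by showing that every face of $C_0$ contains a terminal, so $f\le t$. The paper then reduces to the cellular case: it augments $C_0$ with non-separating arcs inside the non-disk faces (the construction of Chambers et al., Lemma~2.1) so as to obtain a cellular graph with the same vertex and face sets, observes that a cellular graph with minimum degree two and a degree-two vertex must be a single loop, and otherwise applies Euler's formula with the minimum-degree-three bound $3v\le 2e$. You avoid cellularization entirely: you control the non-loop part of $C_0$ directly via Euler's inequality $v-e+f\ge 2-g$ together with $2e'\ge 3v'$, and you bound the number $c_1$ of loop-components by the additivity $\chi(S)=(v-e)+\sum_F\chi(F)$ together with $\chi(F)\le 2-b_F$, after using minimality to show that each loop-component's loop borders two distinct faces (hence contributes two boundary circles to $\sum_F b_F$). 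The paper's argument is shorter modulo the cellularization lemma taken as a black box; yours is more self-contained, works at the level of Euler characteristic of the faces rather than modifying $C_0$, and has the merit of isolating exactly where the genus term is needed to bound the loop-components, as your parallel-meridians-on-a-torus example shows is unavoidable.
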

\begin{proof}
  We first note that each face of~$C_0$ contains at least one terminal:
  Otherwise, let $e$ be an edge incident to a face not containing a
  terminal; we could remove~$e$ without violating the multicut dual
  property (because it would not change which pairs of terminals are
  separated), contradicting the minimality of~$C_0$.  Thus, $C_0$ has at
  most $t$ faces.

  Let us first assume that every face of~$C_0$ is a disk.  If $C_0$
  contains a degree-two vertex, this means that some connected component
  of~$C_0$ is a loop~$\ell$.  Since each face of~$C_0$ is a disk and $S$ is
  connected, the graph~$C_0$ must be connected, so it equals~$\ell$, and
  the statement of the lemma holds.  So we can assume that $C_0$ has
  minimum degree at least three.  Let $v$, $e$, and~$f$ be the numbers of
  vertices, edges, and faces of~$C_0$; we deduce that $3v\le2e$.  Combining
  this with Euler's formula $v-e+f=2-g$ and the fact that $f\le t$, we
  obtain that $e=O(g+t)$, and thus also $v=O(g+t)$, as desired.

  If some faces of~$C_0$ are not disks, we can iteratively extend~$C_0$ by
  adding edges between existing vertices of~$C_0$ so that no new face is
  created but each face is cut into a disk (see, e.g., Chambers et
  al.~\cite[proof of Lemma~2.1]{ccelw-scsh-08}; the proof in that article
  extends verbatim to non-orientable surfaces).  Applying the reasoning of
  the previous paragraph to this new graph yields that it has $O(g+t)$
  vertices and edges; this is also true for~$C_0$.
\end{proof}

The main structural property of~$C_0$ is isolated in the following lemma:
\begin{lemma}\label{L:crossingbound}
  There are $O(g+t)$ crossings between~$C_0$ and each edge of~$K$.
\end{lemma}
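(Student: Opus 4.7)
The approach is a surgery argument by contradiction, exploiting both the shortness of each edge of~$K$ (Proposition~\ref{P:cutgraph}) and the fact that $C_0$ was chosen to minimize the number of crossings with~$K$ among shortest multicut duals. Cut $S$ along~$K$ to obtain a closed disk~$D$ whose boundary traverses two copies $f',f''$ of every edge~$f$ of~$K$; the intersection $H := C_0 \cap D$ is then a planar graph in~$D$ whose interior vertices are the $O(g+t)$ vertices of~$C_0$ (Lemma~\ref{L:sizedual}) and whose boundary vertices on $\partial D$ are the crossings of $C_0$ with~$K$ (each crossing with~$f$ producing one vertex on~$f'$ and one on~$f''$). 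Let $H' := H \cup \partial D$, still a planar graph in~$D$, and let $k_e$ denote the number of crossings of~$C_0$ with the fixed edge~$e$.

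Call a face of $H'$ an \emph{$e$-bigon} if its boundary consists of exactly one arc~$\alpha$ of~$H$ and one piece~$\sigma$ of $e'\cup e''$ bounded by two crossings of~$C_0$ with~$e$ (so $\sigma$ contains no vertex of~$K$ in its interior). The central combinatorial claim is that, once $k_e$ exceeds a suitable constant multiple of $g+t$, the graph $H'$ must contain an $e$-bigon. I would establish this by applying Euler's formula to $H'$ in~$D$ together with the bounds $|V(C_0)|,|E(C_0)|,|V(K)|,|E(K)| = O(g+t)$: the union $e'\cup e''$ is subdivided by the $2k_e$ boundary vertices into $2k_e + O(1)$ pieces, each of which is bounded by a unique face of~$H'$, and a discharging argument should show that at most $O(g+t)$ of these pieces can belong to faces that are not $e$-bigons. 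This is the main obstacle: a face of~$H'$ adjacent to $e'\cup e''$ may touch it along several pieces via arcs of~$H$ making complicated excursions through the interior of~$D$, and the charging must account for all such configurations using only the complexity bounds above.

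Given an $e$-bigon $F$ with bounding arc~$\alpha$ and piece~$\sigma$, form $C_0'$ from $C_0$ by replacing~$\alpha$ with a copy of~$\sigma$ pushed slightly off~$e$ into the interior of~$F$. Three checks complete the argument. First, for length: a standard triangle-inequality argument shows that subpaths of shortest paths are themselves shortest paths in the cross-metric setting, so $|\sigma| \leq |\alpha|$ and $|C_0'| \leq |C_0|$; moreover, one can arrange the push-off so it crosses exactly the same edges of~$G$ as~$\sigma$. Second, for the multicut dual property: the interior of~$F$ meets $\partial D$ only along~$\sigma$ and therefore contains no vertex of~$K$ and no terminal, so the surgery does not change which face of~$C_0$ contains any terminal. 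Third, for the crossing count with~$K$: the pushed copy of~$\sigma$ lies in the interior of~$F$, which by construction contains no edge of~$K$, and stays on one side of~$e$, so the two crossings of~$\alpha$ with~$e$ at the endpoints of~$\sigma$ are eliminated while no new crossing with~$K$ is introduced. Hence $C_0'$ is a shortest multicut dual with strictly fewer crossings with~$K$ than~$C_0$, contradicting the choice of~$C_0$; thus $k_e = O(g+t)$.
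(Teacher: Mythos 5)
Your surgery argument (last paragraph) is essentially the paper's ``monogon'' exchange, and it is sound apart from a small slip: the push-off of~$\sigma$ must go to the side of~$e$ \emph{opposite} the face~$F$ (where the rest of~$C_0$ continues past the two crossings), not ``into the interior of~$F$''; otherwise the new arc cannot reconnect with~$C_0$ without recreating the two crossings.

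The genuine gap is the central combinatorial claim: that once $k_e=\Omega(g+t)$, some face of~$H'$ must be an $e$-bigon. This is false. Consider $k$ pairwise-disjoint arcs of~$\CC_0$, each with one endpoint on~$e'$ and the other on~$e''$, arranged in parallel in~$D$. Then every face incident to $e'\cup e''$ is either a quadrilateral bounded by \emph{two} arcs and two pieces of $e'\cup e''$, or one of the two extremal faces that touch corners of~$D$ and other sides. None is an $e$-bigon, yet $k=k_e$ can be arbitrarily large compared with $g+t$, so no Euler-formula/discharging argument can force an $e$-bigon. (Concretely: on the torus, take $K$ with one vertex and two loop edges $e_1,e_2$, and a single closed curve~$C$ winding $k$~times around; it crosses~$e_1$ exactly $k$ times with no $e_1$-bigon.) This is exactly why the paper's proof does not stop at the monogon exchange: after contracting all crossings with~$e$ to a single point~$p$, it must also rule out a \emph{loop of~$L$ incident to two bigons} (faces with \emph{two} copies of~$p$ on their boundary). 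That second exchange, requiring the case analysis of Figures~\ref{F:exchange-orient} and~\ref{F:exchange-nonorient}, is what handles precisely the parallel-arcs configurations your claim cannot, and only then does a Euler-formula count (for a one-vertex graph with no monogon and no bigon) yield the $O(g+t)$ bound. Without an analogue of that bigon argument, your proof cannot be completed.
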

As in algorithms for other problems using the same
approach~\cite{km-spktc-12,ccelw-scsh-08,en-sncwp-11,cen-mcshc-09,f-sntcd-13},
the proof of this lemma consists of an exchange argument: If $C_0$ crosses
an edge of~$K$ too many times, we can replace $C_0$ with a no longer
multicut dual that crosses~$K$ fewer times, contradicting the choice
of~$C_0$.  The proof ultimately boils down to topological considerations.
Let us also mention that the only property that we are using on the edges
of~$K$ is that they are disjoint shortest paths (except possibly at their
endpoints).
\begin{proof}[Proof of Lemma~\ref{L:crossingbound}]
  We focus on a specific edge~$e$ of~$K$ crossed by~$C_0$, forgetting about
  the others.  It is convenient to put an \emph{obstacle} close to each of
  the two endpoints of~$e$ (since $e$ is a shortest path, its endpoints are
  distinct).  It is also convenient to temporarily look at the situation
  differently, by forgetting about~$G$ and by modifying $C_0$ in the
  vicinity of~$e$ by pushing all crossings of~$C_0$ with~$e$ to a single
  point~$p$ on~$e$ (Figure~\ref{F:contract-exchange-monogon}(a, b)).  This
  transforms~$C_0$ into another graph~$C'_0$ that has~$p$ as a new vertex.
  To prove the lemma, it suffices to prove that the degree of~$p$ in~$C'_0$
  is $O(g+t)$.  Moreover, every non-loop edge of~$C'_0$ corresponds to one
  of the two endpoints of an edge of~$C_0$, and there are $O(g+t)$ of these
  by Lemma~\ref{L:sizedual}.  Hence, if we let $L$ be the one-vertex
  subgraph of~$C'_0$ made of the loops of~$C'_0$ based at~$p$, it suffices
  to prove that the number of loops in~$L$ is $O(g+t)$.

  \begin{figure}
    \centerline{%
  \begin{inkfragenv}\def\svgwidth{\linewidth}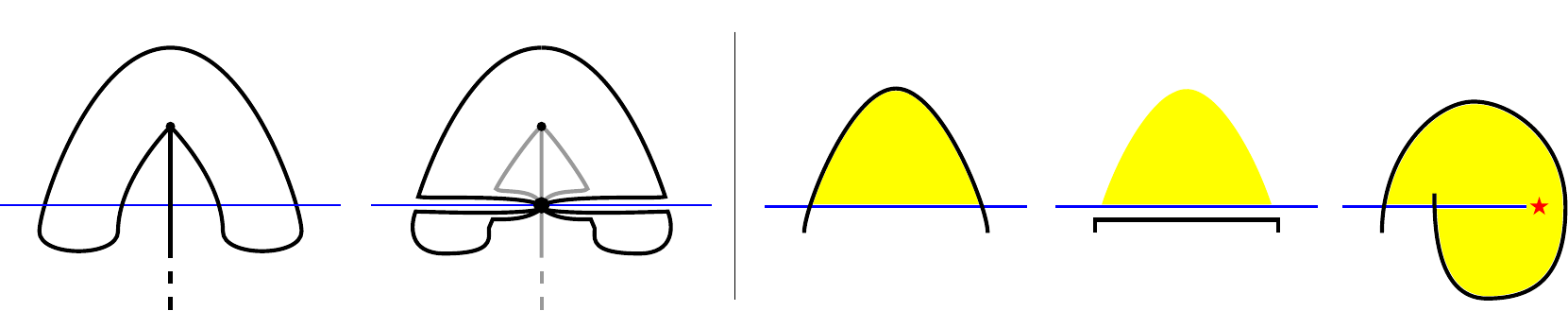\end{inkfragenv}%
}
    \caption{(a): The part of the multicut dual~$C_0$ close to~$e$
      (depicted as a horizontal line).  (b): its modified version~$C'_0$
      obtained by pushing all crossings with~$e$ to a single point~$p$.
      The black lines are the loops in~$L$, the grey ones are the other
      edges of~$C'_0$.  (c): The configuration corresponding to a monogon.
      The disk is shaded.  (d): The new configuration to replace~(c).  (e):
      This configuration is not a monogon, because of the presence of an
      obstacle (shown as a star).}
    \label{F:contract-exchange-monogon}
  \end{figure}

  A \emph{monogon}, resp.\ a \emph{bigon}, is a face of~$L$ that is,
  topologically, an open disk with one, resp.\ two, copies of~$p$ on its
  boundary and containing in its interior no obstacle, no vertex of~$C_0$,
  and no terminal.  We first claim that no face of~$L$ can be a monogon.
  Otherwise (Figure~\ref{F:contract-exchange-monogon}(c)), some edge~$e'$
  of~$C_0$ crosses~$e$ twice consecutively, at points $x$ and~$y$ say, such
  that the pieces of $e$ and~$e'$ between $x$ and~$y$ bound a disk
  containing in its interior no obstacle, no vertex of~$C_0$, and no
  terminal.  Since the disk contains no obstacle, the boundary of the disk
  lies entirely on one side of~$e$, as in
  Figure~\ref{F:contract-exchange-monogon}(c), and other cases such as the
  one shown in Figure~\ref{F:contract-exchange-monogon}(e) cannot occur.
  Since the disk contains no vertex of~$C_0$, it contains no piece of~$C_0$
  in its interior.  We can thus replace the piece of~$e'$ between $x$
  and~$y$ with a path that runs along~$e$
  (Figure~\ref{F:contract-exchange-monogon}(d)).  This operation does not
  make $e'$ longer, since $e$ is a shortest path; it removes the two
  intersection points with~$e$ and does not introduce other crossings
  with~$K$.  Moreover, since the disk contains no terminal in its interior,
  the resulting graph is also a multicut dual.  This contradiction with the
  choice of~$C_0$ proves the claim.
 
  We will prove below that no loop in~$L$ can be incident to two bigons.
  Taking this fact for granted for now, whenever one face of~$L$ is a
  bigon, we remove one of the two incident loops, and iterate until there
  is no bigon any more.  The previous fact implies that these iterations
  remove at most half of the loops: If $L'$ is the remaining set of loops,
  we have $|L|\le2|L'|$.  Furthermore, $L'$~has no monogon or bigon.  This
  latter fact, together with arguments based on Euler's formula, implies
  that the number of loops in~$L'$ is
  $O(g+t)$~\cite[Lemma~2.1]{ccelw-scsh-08}, because $S$ has Euler
  genus~$g$, and the total number of obstacles, vertices of~$C_0$, and
  terminals (which are the points that prevent a face that is a disk of
  degree one or two to be a monogon or bigon) is $O(g+t)$
  (Lemma~\ref{L:sizedual}).  (That article considers the orientable case
  only, but the lemma~\cite[Lemma~2.1]{ccelw-scsh-08} and its proof extend
  directly to the non-orientable case.)  This implies that $|L|=O(g+t)$,
  which proves the lemma.

  So there only remains to prove that no loop in~$L$ can be incident to two
  bigons.  Assume that such a loop exists.  On the original surface~$S$,
  this corresponds to two ``strips'' glued together, see
  Figure~\ref{F:exchange-orient}, top: Each strip is bounded by two pieces
  of~$e$ and two pieces of edges of~$C_0$, and these two strips share a
  common piece of edge of~$C_0$.  Since a bigon contains no obstacle, the
  sides of the strips contain none of the endpoints of~$e$.  Since the
  interiors of these strips contain no vertex of~$C_0$, they contain no
  piece of~$C_0$.

  \begin{figure}
    \centerline{%
  \begin{inkfragenv}\def\svgwidth{\linewidth}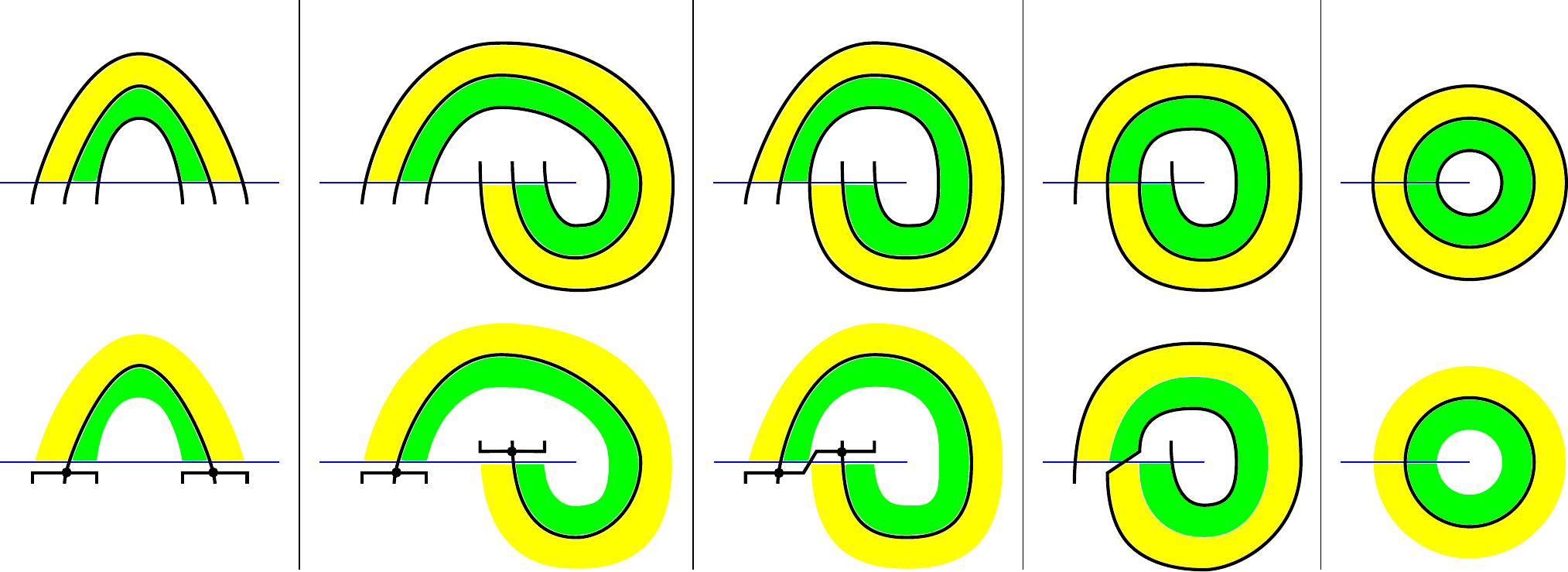\end{inkfragenv}%
}
    \caption[The exchange argument in the two-sided case.]{The exchange
      argument in the two-sided case.  The horizontal
      segment represents edge~$e$ of~$K$.  The strips are shaded; they
      represent disks with no terminal, no obstacle, and no piece of~$C_0$
      in their interior.

      If the sides of the strips are all on the same side of~$e$, there is
      a single case~(a).  Replacing the top configuration of~$C_0$ with the
      bottom configuration (creating two new vertices) still yields a
      multicut dual (as all pairs of faces that were separated in the top
      configuration are still separated in the bottom configuration, except
      possibly for the strips, but these contain no terminal), which is no
      longer than the original (because $e$ is a shortest path) and has
      less crossings with~$K$.  This is a
      contradiction with the choice of~$C_0$.

      If the sides of the strips are on different sides of~$e$, we need to
      distinguish according to four cases (b--e), depending on how the
      sides of the strips overlap.  In all cases, the same argument shows
      that we could find a no longer multicut dual with fewer crossings
      with~$K$, a contradiction.  (We could also remark that case~(e) is
      impossible because it involves closed curves in~$C_0$ without
      vertex.)}
    \label{F:exchange-orient}
  \end{figure}
  \begin{figure}
    \centerline{%
  \begin{inkfragenv}\def\svgwidth{\linewidth}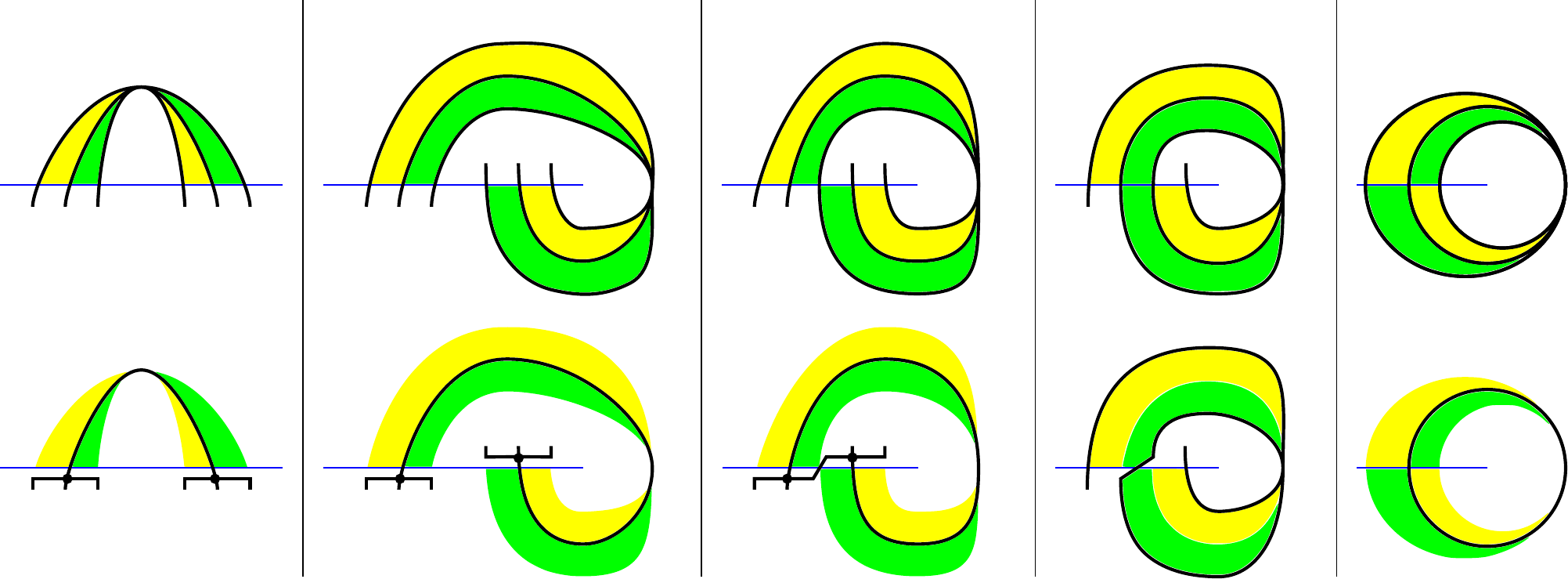\end{inkfragenv}%
}
    \caption{The exchange argument in the one-sided case.  The argument is
      the same as in Figure~\ref{F:exchange-orient}; the sole difference is
      that the strips are drawn with a ``twist''.}
    \label{F:exchange-nonorient}
  \end{figure}
 
  If $S$ is assumed to be orientable, there are five possible cases up to
  symmetry, see Figure~\ref{F:exchange-orient}, top: (a) is the case where
  each strip has its two sides on the same side of~$e$, (b--e) are the
  various cases where each strip has its two sides on opposite sides
  of~$e$.  In each case, we change $C_0$ by modifying some edges and
  possibly by adding vertices (see Figure~\ref{F:exchange-orient}, bottom).
  Since $e$ is a shortest path and the new pieces ``run along''~$e$, one
  can check that the resulting graph is no longer than~$C_0$; moreover, it
  crosses~$K$ fewer times.  Also, each replacement may split some faces of
  the original graph and attach each of the strips to some of the resulting
  faces, but pairs of terminals that were initially separated by~$C_0$ are
  still separated by the new graph, which is thus also a multicut dual.
  This contradicts the choice of~$C_0$.

  If $S$ is non-orientable, there are five other cases, because the loops
  in~$L$ may be one-sided.  However, an entirely similar argument as in the
  previous paragraph (Figure~\ref{F:exchange-nonorient}) allows to conclude.
\end{proof}

\subsection{Some Shortest Multicut Dual is Good}

We now give a more precise description of the intersection pattern
between~$C_0$ and~$K$, using the properties proved in the previous section.
Cutting the surface~$S$ along~$K$ yields a topological
disk~\emphdef{$\bm{D}$}.  The boundary~\emphdef{$\bm{\partial D}$} of~$D$
corresponds to copies of vertices and edges of~$K$; there are $O(g+t)$ of
these.  The copies of the vertices of~$K$ on~$\partial D$ are called the
\emphdef{corners} of~$D$, while the copies of the edges of~$K$ are called
the \emphdef{sides} of~$D$.  The sides of~$D$ can be reglued pairwise to
obtain~$S$.

\begin{figure}
  \centerline{%
  \begin{inkfragenv}\def\svgwidth{.95\linewidth}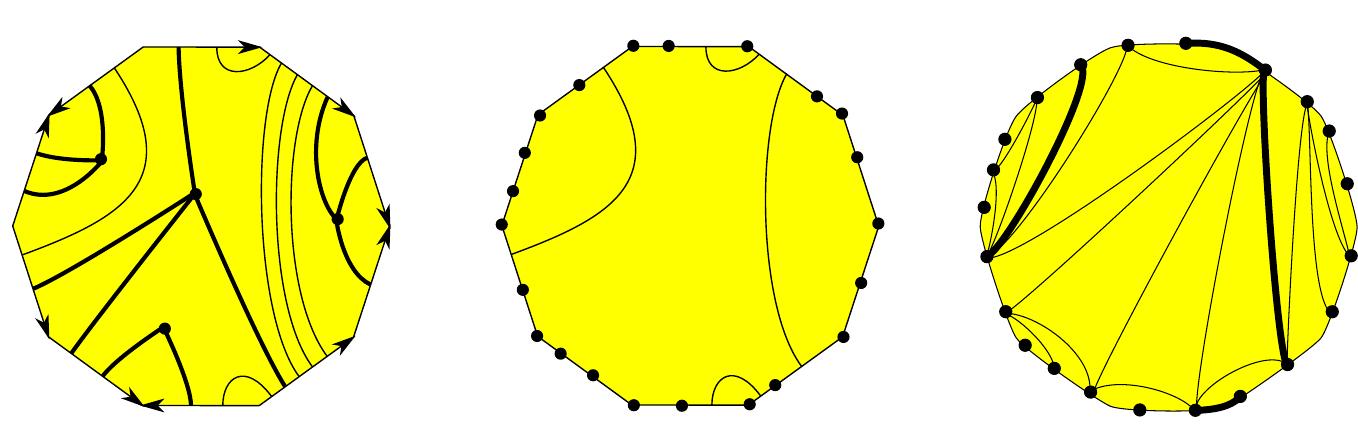\end{inkfragenv}%
}
  \caption{(a): The surface~$S$, and a (good) graph~$C$, viewed after
    cutting~$S$ along the cut graph~$K$, resulting in a disk~$D$.  The
    trees of~$\CC$ are in thick lines, the arcs being in thin lines.  (b)
    and (c) are used to illustrate the proof of
    Proposition~\ref{P:enumtop}: (b) shows the disk~$D$, with vertices on
    the boundary corresponding to the corners and the leaves of trees
    of~$\CC$; arcs connecting the same pair of edges of this polygon are
    represented by a single arc with an integer indicating their number;
    (c) represents the dual polygon (compared from~(b), every vertex on the
    boundary is replaced with an edge, and every edge is replaced with a
    vertex); the arcs correspond to the internal or boundary edges of this
    dual polygon in thick lines; these edges can be augmented arbitrarily
    to a weighted triangulation of the dual polygon (the weights of the
    edges are zero unless noted otherwise).}
  \label{F:good}
\end{figure}
Let $C$ be a graph on~$S$ in general position with respect to~$K\cup G$.
In particular, $C_0$ is such a graph.  Cutting $S$ along~$K$ transforms the
overlay of~$K$ and~$C$ into a graph~$U$ drawn in~$D$
(Figure~\ref{F:good}(a)): Each edge of that graph corresponds to a piece of
an edge of~$K$ or of~$C$; each vertex of that graph corresponds to a vertex
of~$K$, a vertex of~$C$, or a (four-valent) intersection point between an
edge of~$K$ and an edge of~$C$.  We denote by~\emphdef{$\bm\CC$} the
subgraph of~$U$ made of the edges corresponding to pieces of edges of~$C$
(thus, $\CC$ lies in the interior of~$D$ except possibly for some of its
leaves), and by~\emphdef{$\bm\KK$} the subgraph of~$U$ made of the edges
corresponding to pieces of edges of~$K$ (thus, the image of~$\KK$ is the
boundary of~$D$).

\begin{definition}\label{D:good}
  We say that $C$ is \emphdef{good} if $\CC$ is the disjoint union of (see
  Figure~\ref{F:good}(a)):
  \begin{itemize}
  \item \emphdef{trees} with at least one vertex of degree at least two,
    with all their leaves on~$\partial D$, and
  \item \emphdef{arcs}, namely, edges with both endpoints on~$\partial D$,
    on different sides of~$\partial D$,
  \end{itemize}
  and, moreover:
  \begin{itemize}
  \item there are $O(g+t)$ intersection points between $C$ and each side
    of~$\partial D$, and
  \item the total number of edges of the trees is $O(g+t)$.
  \end{itemize}
\end{definition}
Finally, Section~\ref{S:struct} can be summarized as follows:
\begin{proposition}\label{P:struct}
  Some shortest multicut dual is good.
\end{proposition}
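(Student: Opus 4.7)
My plan is to show that the specific $C_0$ fixed at the start of Section~\ref{S:struct}---a shortest multicut dual that further minimizes the number of crossings with~$K$, is inclusionwise minimal, and has minimum degree at least two---is itself good. The numerical conditions in Definition~\ref{D:good} follow immediately from the preceding lemmas: Lemma~\ref{L:crossingbound} directly gives $O(g+t)$ intersections between $C_0$ and each side of~$\partial D$, and each edge of~$C_0$ contributes at most two non-arc pieces to~\CC\ (the two end-pieces at its endpoints; any remaining pieces are arcs), so by Lemma~\ref{L:sizedual} the total number of tree edges is $O(g+t)$.

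For the structural conditions, I would first use that general position of~$C_0$ with respect to $K\cup G$ places every vertex of~$C_0$ in the interior of~$D$. Hence the vertices of~\CC\ on~$\partial D$ are exactly the crossings of~$C_0$ with~$K$; each such crossing yields two degree-one vertices of~\CC\ lying on the two copies of the crossed edge. Because $C_0$ has minimum degree at least two, every leaf of~\CC\ lies on~$\partial D$, and every interior vertex of~\CC\ has degree at least two. Next I would rule out cycles in~\CC: a cycle cannot pass through a (degree-one) boundary vertex, so it lies in the interior of~$D$ and bounds a subdisk there whose interior maps injectively into $S\setminus K$. This subdisk is covered by faces of~$C_0$ entirely disjoint from~$K$, which therefore contain no terminal; deleting any edge of the cycle would merge the outermost such face with an adjacent one without merging any pair of terminals, contradicting the inclusionwise minimality of~$C_0$. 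Thus every component of~\CC\ is a tree with all leaves on~$\partial D$, and is either a single edge between two boundary points (a candidate arc) or contains an interior vertex of degree at least two (a tree in the sense of Definition~\ref{D:good}).

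The main obstacle, and the step that really uses the minimum-crossing choice of~$C_0$, is to verify that every single-edge component has its two endpoints on different sides of~$\partial D$. I would argue by contradiction: suppose such a component~$\alpha$ and a piece~$\beta$ of the common side bound a disk~$R_D$ in~$D$. This~$R_D$ maps to a disk~$R_S$ in~$S$ sitting on one side of the corresponding edge~$f$ of~$K$, with interior disjoint from~$K$; as in the cycle argument, $R_S$ contains no terminal and, by inclusionwise minimality, no vertex of~$C_0$. I would then perform an exchange in the spirit of the proof of Lemma~\ref{L:crossingbound}: replace~$\alpha$ by a curve~$\alpha'$ drawn parallel to~$\beta$ on the opposite side of~$f$, close enough to~$\beta$ that no terminal lies in the thin strip between~$\alpha'$ and~$\beta$ and that~$\alpha'$ has the same $G$-crossings as~$\beta$. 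Since $\beta$ is a sub-path of the shortest path~$f$ it is itself shortest between its endpoints, so $|\alpha'|=|\beta|\le|\alpha|$, and~$\alpha'$ crosses no edge of~$K$. A face-by-face check shows that the modified graph remains a multicut dual, because both $R_S$ and the thin strip along $\beta$ are terminal-free, so no pair of terminals previously separated becomes merged. The modified graph has length at most that of~$C_0$ and crosses~$K$ at least twice fewer times, contradicting either the length-minimality of~$C_0$ or, in case of equal length, its choice as a minimum-crossing shortest multicut dual. Hence no same-side arcs exist, which completes the verification that $C_0$ is good.
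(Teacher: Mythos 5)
Your overall strategy is the same as the paper's: show that the extremal $C_0$ (shortest, minimizing crossings with~$K$, inclusionwise minimal, pruned and dissolved) itself satisfies Definition~\ref{D:good}. The bookkeeping for the two numerical conditions is correct, and your argument ruling out cycles in~$\CC_0$ matches the paper's.

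The gap is in the key step, ruling out single-edge components of~$\CC_0$ with both endpoints on the same side. Two issues. First, you do not select an \emph{innermost} offending component, whereas the paper does; without this, the disk~$R_S$ may contain other single-edge components of~$\CC_0$ nested inside (pieces of~$C_0$ that enter and leave $R_S$ through~$\beta$), and the proposed replacement~$\alpha'$, drawn parallel to~$\beta$ just across~$f$, would then cross the continuations of those pieces below~$f$, so the modified graph would not be an embedding. Second, the assertion that ``by inclusionwise minimality, $R_S$ contains no vertex of~$C_0$'' is not justified and does not follow. Unlike the cycle case, the faces of~$C_0$ meeting the interior of~$R_S$ are \emph{not} entirely contained in~$R_S$: they can extend across~$\beta$ to the other side of~$f$, where they may separate terminal pairs. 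Hence a vertex~$v$ of~$C_0$ inside~$R_S$, with incident edges crossing~$\beta$, can be indispensable (deleting an edge at~$v$ could merge two faces that contain a terminal pair outside~$R_S$), so inclusionwise minimality is compatible with vertices of~$C_0$ lying in~$R_S$. You invoke this unjustified claim precisely to make $R_S$ free of pieces of~$C_0$ so that~$\alpha'$ can be drawn without new crossings; without it (and without the innermost selection) the exchange as you describe it does not go through. The paper's proof instead takes an innermost such arc and cites the monogon-removal argument of Lemma~\ref{L:crossingbound}; to be fully rigorous one still has to check that the innermost choice makes the disk free of pieces of~$C_0$, but the innermost step is the minimum you need and it is missing from your write-up.
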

\begin{proof}
  We prove that $C_0$ is indeed good.

  If some connected component of~$\CC_0$ contains a cycle, that cycle
  bounds a disk on~$S$ containing no terminal (since the terminals are
  vertices of~$K$), so we can remove any edge of that cycle without
  affecting the fact that $C_0$ is a multicut dual; this contradicts the
  choice of~$C_0$.  Therefore, $\CC_0$ is a forest.

  Since $C_0$ has no degree-zero or degree-one vertex, $\CC_0$ has no
  isolated vertex, and any degree-one vertex of~$\CC_0$ lies on~$\partial
  D$.  Any component of~$\CC_0$ that is a single path must have its
  endpoints on different sides of~$\KK$; otherwise, we could take an
  innermost such path and ``push it across'' the side of~$\CC_0$ its
  endpoints lie in, using the same argument as in
  Figure~\ref{F:contract-exchange-monogon}(a, b): This would not make $C_0$
  longer, and would decrease its number of crossings with~$K$,
  contradicting the choice of~$C_0$.  This proves that $C_0$~satisfies the
  first two points of Definition~\ref{D:good}.  The third and fourth points
  trivially follow from Lemmas \ref{L:crossingbound} and~\ref{L:sizedual},
  respectively.
\end{proof}

\section{Enumerating Topologies}\label{S:enum}

Let $C$ be a good graph on~$S$; recall that the union of~$\KK$ and~$\CC$
forms a connected planar graph~$U$.  The \emphdef{topology of~$\bm{C}$} is
the data of the combinatorial map of~$U$, where the outer face is
distinguished, and the sides are paired.  Intuitively, it describes
combinatorially the relative positions of~$C$ and~$K$.  More generally, a
\emphdef{topology} is the combinatorial map of a connected, planar graph
with a distinguished outer face and a pairing of the sides (these are the
subpaths of the outer cycle connecting consecutive degree-two vertices).

\begin{lemma}\label{L:algovalid}
  Given a topology, we can determine whether it is the topology of a good
  graph that is a multicut dual in $O((g+t)^2)$ time.
\end{lemma}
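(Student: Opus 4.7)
The topology encodes the combinatorial map of $U=\KK\cup\CC$ embedded in the disk~$D$, the distinguished outer face, and the side pairing; the corners of~$\partial D$ are labelled by their underlying vertices of~$K$, so in particular we know which corners are terminals and how the pairs in~$R$ refer to them. By Lemmas~\ref{L:sizedual} and~\ref{L:crossingbound}, the total combinatorial size of any candidate~$U$ is $O((g+t)^2)$: each of the $O(g+t)$ sides of~$\partial D$ carries $O(g+t)$ crossings with~$\CC$, while $\CC$ contributes only $O(g+t)$ additional edges in its interior (from the trees). My plan is to perform two groups of checks on~$U$, each running in time linear in~$|U|$.

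First, syntactic validation that the topology is indeed that of a \emph{good} graph. I would split the edges of~$U$ into~$\KK$ (the outer boundary cycle) and~$\CC$ (the remainder) and, via a single DFS on $\CC$, verify the conditions of Definition~\ref{D:good}: $\CC$ is a forest, each tree component has at least one vertex of degree~$\ge 2$ and all its leaves on~$\partial D$, each path component is an arc whose two endpoints lie on distinct sides of~$\partial D$, the crossings-per-side bound is respected, and the total number of tree edges is $O(g+t)$. I would then check that the side pairing matches sides of equal length (same number of $\CC$-crossings) and that gluing the pairs produces a surface of the prescribed Euler genus~$g$ and orientability; Euler's formula on the glued combinatorial map gives a one-line sanity check.

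Second, verification that the corresponding graph~$C$ is a multicut dual. The faces of~$C$ on~$S$ are exactly the equivalence classes of faces of~$U$ on~$D$ under the relation ``adjacent across an edge of~$\KK$, or adjacent across a pair of glued pieces of~$\partial D$.'' I would enumerate the $O((g+t)^2)$ faces of~$U$ from the combinatorial map, initialise a union--find structure on them, and traverse every edge of~$\KK$ once: for each proper interior edge of~$\KK$ I would union its two incident faces, and for each pair of glued $\partial D$-pieces I would union the face of~$U$ lying on the disk-side of the piece with the corresponding face on its paired side. Each terminal~$t$ is labelled at one or more corners of~$\partial D$; since $t$ is a vertex of~$K$ and not of~$C$, every face of~$U$ incident to any such corner lies in the same face of~$C$ on~$S$ (all such faces of~$U$ are merged via the $\KK$-edges meeting at~$t$), so I would take any such incident face as a representative for~$t$. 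Finally, for each pair $\{t_1,t_2\}\in R$ I would check that the union--find representatives of~$t_1$ and~$t_2$ differ; the topology passes iff all pairs are separated.

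The main subtlety is handling the side-pairing identifications: the $O(g+t)$ crossings on one side must be matched in cyclic order with those on its paired side, possibly in reverse and possibly with an orientation flip, which is the situation that naturally arises when~$S$ is non-orientable or the pairing is orientation-reversing. The unions of ``face of~$U$ immediately to the disk-side of a crossing-piece on one side'' with ``the corresponding face on the paired side'' must respect this correspondence, or else the computed face partition of~$C$ on~$S$ will be wrong. Once this bookkeeping is done faithfully over the combinatorial map, everything else is elementary traversal; since $|U|=O((g+t)^2)$ and union--find operations are essentially linear (the inverse-Ackermann factor being absorbed into the stated bound), the total running time is $O((g+t)^2)$.
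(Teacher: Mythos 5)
Your proof follows essentially the same strategy as the paper's: verify goodness by a linear scan of the combinatorial map, glue the sides, recover the faces of~$C$ on~$S$, and check that every pair in~$R$ is separated. The union--find implementation of the face-merging step is a concrete and correct way to realize what the paper describes more tersely, and your remark about aligning the crossings across paired sides (including orientation-reversing gluings) addresses a real bookkeeping subtlety. One small imprecision: after gluing, the paper checks that the resulting boundary graph equals the actual combinatorial map of~$K$ (including the identification of corners with $K$'s vertices), whereas you only check that the glued surface has the right genus and orientability; the latter is necessary but not quite sufficient, since the topology must be realizable with the specific cut graph~$K$, not merely with some cut graph on a surface of genus~$g$. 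This is easy to repair---match the glued boundary against $K$'s stored combinatorial map in $O(g+t)$ time---and does not affect the rest of the argument. Also, there are no ``proper interior edges of~$\KK$'' in~$D$ (all of~$\KK$ lies on~$\partial D$), so that clause of your union step is vacuous; the pairing step alone does the work.
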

\begin{proof}
  First, we can assume that the topology has complexity $O((g+t)^2)$, since
  otherwise it is not the topology of a good graph.  We first check that
  the outer boundary is a cycle.  Then we glue the sides of the outer face
  according to the pairing.  This results in the combinatorial map, on~$S$,
  of the overlay of $C$ (corresponding to the interior edges of the
  topology) and~$K$ (corresponding to the edges on the outer face---we can
  check that the combinatorial map indeed is the same as that of~$K$).  All
  of this takes $O((g+t)^2)$ time.

  Deciding whether $C$ is good is easy in time linear in the size of the
  topology.  Deciding whether $C$ is a multicut dual can also be done in
  that amount of time: We can determine which elements of~$T$ fall into
  which face of~$C$, since each terminal is a vertex of~$K$, and whether
  each pair of terminals in~$R$ lies in different faces of~$C$.
\end{proof}
This also implies that whether $C$ is a multicut dual or not is completely
determined by its topology.  Hence the following terminology: A topology is
\emphdef{valid} if it is the topology of a good graph that is a multicut
dual.  The result of this section can now be stated as follows.
\begin{proposition}\label{P:enumtop}
  The number of valid topologies is ${(g+t)}^{O(g+t)}$; these topologies
  can be enumerated within the same time bound.
\end{proposition}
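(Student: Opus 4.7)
The plan is to encode each valid topology by a tuple of combinatorial data, each piece admitting only $(g+t)^{O(g+t)}$ options, and then to sieve the candidates through Lemma~\ref{L:algovalid}. Following the picture suggested by Figure~\ref{F:good}(b,c), I would describe a topology by four ingredients:
\begin{enumerate}
\item the underlying plane forest formed by the tree components of~$\CC$;
\item an assignment of the $O(g+t)$ leaves of that forest to positions on~$\partial D$, specifying for each leaf which side of~$\partial D$ it lies on and its rank among the other tree leaves on that side;
\item a triangulation of the \emph{dual polygon}~$P$, whose $O(g+t)$ vertices correspond to the intervals of $\partial D$ between consecutive corners of~$D$ and tree leaves;
\item a non-negative integer multiplicity on each of the $O(g+t)$ edges of this triangulation, recording the size of the corresponding bundle of parallel arcs of~$\CC$ (with multiplicity zero for edges that carry no arc).
\end{enumerate}

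Counting the options in order, the trees contain $O(g+t)$ edges in total by Definition~\ref{D:good}, so the forest in~(1) is one of $2^{O(g+t)}$ plane isomorphism types; the placement in~(2) distributes $O(g+t)$ labelled leaves among $O(g+t)$ sides with a cyclic order on each, giving $(g+t)^{O(g+t)}$ choices; the number of triangulations of an $O(g+t)$-gon in~(3) is $2^{O(g+t)}$; and the multiplicities in~(4) are non-negative integers summing to $O((g+t)^2)$, since by Definition~\ref{D:good} each of the $O(g+t)$ sides of~$\partial D$ carries at most $O(g+t)$ arc endpoints, yielding $(g+t)^{O(g+t)}$ assignments on $O(g+t)$ edges. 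Multiplying, there are $(g+t)^{O(g+t)}$ candidate tuples in all.

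To enumerate the valid topologies, I would iterate over all tuples, reconstruct in polynomial time the combinatorial map of $\KK\cup\CC$ in~$D$ that each tuple determines, and apply Lemma~\ref{L:algovalid} in $O((g+t)^2)$ time to keep only the valid ones. This covers every valid topology: for any good multicut dual, bundling its parallel arcs yields a non-crossing chord configuration of~$P$, which extends to a triangulation by adding zero-weight chords, and the forest structure with its leaf placements can be read off directly. The total running time is thus $(g+t)^{O(g+t)}$.

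The main obstacle I anticipate is the rigorous justification that the four pieces of data recover the combinatorial map of~$U$ uniquely. For~(1)+(2), the key point is that a plane forest together with a cyclic ordering of its leaves around~$\partial D$ has a unique planar embedding inside the disk~$D$, which boils down to the simple connectedness of~$D$. For~(3)+(4), one must observe that in a disk, a non-crossing family of arcs with endpoints in prescribed intervals of the boundary is determined up to isotopy rel~$\partial D$ by how many arcs join each pair of intervals—this is what makes the bundling lossless. A secondary concern is that the encoding may overcount (different triangulations may extend the same non-crossing chord configuration), which is harmless for the upper bound but needs some care in the algorithm's implementation.
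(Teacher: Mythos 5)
Your proposal matches the paper's proof in all essential respects: both decompose the data of a valid topology into (i) the plane forest of tree components together with a placement of its leaves on the sides of~$\partial D$, and (ii) a weighted triangulation of the dual polygon encoding the bundles of parallel arcs, then use Lemma~\ref{L:algovalid} to sieve out the non-valid candidates. The only cosmetic difference is that the paper derives the $2^{O(g+t)}$ count for plane forests by starting from a ternary tree and contracting or deleting edges, whereas you invoke the Catalan-type bound directly; the counts, the dual-polygon device, and the overall enumeration are the same.
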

\begin{proof}[Proof of Proposition~\ref{P:enumtop}]
  It suffices to prove that there are ${(g+t)}^{O(g+t)}$ possible
  topologies for a good graph, and that we can enumerate them within the
  same time bound, because we can afterwards select the ones that
  correspond to multicut duals by Lemma~\ref{L:algovalid}.

  We first focus on enumerating all the possibilities for the trees
  of~$\CC$, forgetting about the arcs.  We know that the total number of
  edges of the forest is $O(g+t)$.  Any such forest can be obtained by:
  \begin{itemize}
  \item starting with a tree~$\tau$ with $O(g+t)$ edges where all internal
    vertices have degree three,
  \item contracting an arbitrary number of edges of that tree,
  \item removing an arbitrary number of edges of that tree.
  \end{itemize}
  Moreover, to specify exactly the topology for the trees of~$\CC$, we need
  to specify the cyclic ordering of the edges around each vertex of~$\tau$,
  and to specify to which side of~$\KK$ each leaf of~$\tau$ belongs.

  There are $2^{O(g+t)}$ possibilities for the choice of the initial
  tree~$\tau$ together with the cyclic ordering of the edges around each
  vertex of~$\tau$ (this is essentially a Catalan number, counting the
  number of binary trees).  Once this is determined, there are
  $(g+t)^{O(g+t)}$ possibilities to determine to which side of~$\KK$ each
  leaf of~$\tau$ belongs (actually, $2^{O(g+t)}$, but this refinement is
  useless here).  There remains to choose which of the $O(g+t)$ edges to
  contract or delete, and to specify to which of the $O(g+t)$ sides
  of~$\KK$ each leaf of~$\tau$ belongs.  To conclude, there are
  $(g+t)^{O(g+t)}$ possible choices for the combinatorial map for the union
  of the trees, which can be enumerated also in $(g+t)^{O(g+t)}$ time.

  Given such a possibility for the trees, we bound the number of
  possibilities for choosing the arcs.  The $O(g+t)$ leaves of the trees
  and the corners cut~$\partial D$ into $O(g+t)$ intervals.  The arcs
  connect distinct intervals (by definition of a good graph); moreover, the
  arcs are pairwise disjoint.  Consider the \emph{dual polygon} having one
  vertex per interval, such that each edge connects two consecutive
  intervals along~$\KK$ (Figure~\ref{F:good}(b, c)).  Connect two vertices
  of the dual polygon by an internal edge of the dual polygon whenever
  there exists at least one arc connecting the corresponding pair of
  intervals.  Since the arcs are non-crossing, the internal edges do not
  cross, and form a subset of a \emph{triangulation} of the dual polygon.
  There are $2^{O(g+t)}$ possible triangulations, since the dual polygon
  has $O(g+t)$ vertices (this is again a Catalan number).  To conclude, the
  number of possibilities for inserting the arcs, assuming the trees have
  already been chosen, is bounded from above by the number of
  \emph{weighted triangulations} of the dual polygon, namely,
  triangulations where each (internal or boundary) edge bears a
  non-negative number that encodes the number of arcs of the corresponding
  type, which is thus $O(g+t)$.  This number is $(g+t)^{O(g+t)}$.
  Moreover, enumerating all these possibilities can be done in the same
  amount of time.
\end{proof}

\section{Dealing With Each Valid Topology}\label{S:final}

The strategy for our algorithm is roughly as follows: For each valid
topology, we compute a shortest graph~$C$ with that topology; then we
return the set of edges of~$G$ crossed by the overall shortest graph~$C$.
Actually, we do not exactly compute an embedding of a shortest graph~$C$;
instead, we compute a shortest drawing (possibly with crossings) of the
same graph, with some homotopy constraints; in particular, that drawing is
no longer than the shortest embedding, and we prove that this also
corresponds to a minimum multicut.  To get an efficient algorithm, we use
dynamic programming on small treewidth graphs~\cite{b-dpgbt-88}.  The key
proposition is the following.
\begin{proposition}\label{P:enumvert}
  Given a valid topology, we can compute, in
  $n^{O\big(\sqrt{g^2+gt}\big)}$ time, some multicut whose
  weight is at most the length of each multicut dual with that topology.
\end{proposition}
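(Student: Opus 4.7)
The plan is to reduce the problem of finding a shortest realization of the fixed valid topology $\mathcal T$ to a constrained \emph{drawing} problem inside the disk $D=S\setminus K$, and then to solve that drawing problem by dynamic programming on a sphere-cut branch decomposition of the planar graph $G_D$ obtained by planarizing $G$ along $K$. Crucially, I would not insist on computing an embedded multicut dual: it suffices to compute a shortest drawing of $\CC$ (possibly with self-crossings) that respects the combinatorial and homotopic data of $\mathcal T$, since any embedded multicut dual with topology $\mathcal T$ is itself such a drawing (hence the minimum drawing length upper-bounds the minimum embedded length), and since $\mathcal T$ alone determines which pairs of terminals end up in different faces, making the set $E'$ of edges of $G$ crossed by the drawing a valid multicut in every case.

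Concretely, $\mathcal T$ prescribes, for each edge $e$ of $\CC$, both its endpoints (a Steiner vertex of a tree, a leaf of a tree on a specified side of $\partial D$, or an arc endpoint on a specified side) and its homotopy class in $D$ relative to those endpoints. In the cross-metric setting, I may restrict each Steiner vertex to lie in one of the $O(n)$ faces of $G_D$ and each leaf or arc endpoint to lie at one of the $O(n)$ intersection points of its prescribed side of $\partial D$ with $G_D$; this gives $O(n)$ candidate positions per vertex. Given such a placement, each edge $e$ of $\CC$ is optimally realized as a shortest path in its prescribed homotopy class in $D$, computable in polynomial time by classical shortest-homotopic-path algorithms~\cite{ce-tnpcs-10,k-csntc-06}. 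The problem thus reduces to choosing the placements of the vertices of $\CC$ (Steiner vertices and boundary endpoints, whose total number is polynomial in $g+t$ by the definition of \emphdef{good}) so as to minimize the sum of these homotopic-shortest-path lengths.

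A naive enumeration of placements would cost $n^{\mathrm{poly}(g+t)}$. To achieve the stated bound $n^{O(\sqrt{g^2+gt})}$, I would perform dynamic programming on a sphere-cut branch decomposition of $G_D$ of width $O(\sqrt n)$~\cite{e-dgteg-03}, using the homotopic-shortest-path oracle of the previous step to cost individual edges of $\CC$ as they are assembled. At each noose $\gamma$ of the decomposition, the DP state records, for each vertex of $\CC$ already placed inside $\gamma$, its identity and discrete position, together with a non-crossing combinatorial pattern describing how the edges of $\CC$ cross $\gamma$ and connect those inside-vertices to the crossing points on $\gamma$. Merging states across the decomposition is then standard.

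The step I expect to be the main obstacle is bounding the number of DP states by $n^{O(\sqrt{g^2+gt})}$. The intuition is that $\CC$ has only $\mathrm{poly}(g+t)$ vertices and edges by the definition of \emphdef{good} together with Lemma~\ref{L:sizedual}, and that a noose of $O(\sqrt n)$ cross-metric length is crossed by an optimal $\CC$ only in a Catalan-type non-crossing pattern; balancing the $O(\sqrt n)$ noose length against the $O(g+t)$-type global budget of $\CC$ — in the spirit of the Klein--Marx argument for planar multiway cut~\cite{km-spktc-12}, but adapted to handle trees, arcs, and the surface contribution of $K$ uniformly — should give the desired exponent. Once the DP returns the cost of the optimal drawing, reading off the edges of $G$ that it crosses produces the sought multicut.
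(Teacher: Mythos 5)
Your reduction to computing a shortest \emph{drawing} of the topology (rather than an embedding) is indeed the paper's key step, but the justification you give is incomplete. It is not clear that a self-crossing drawing with the prescribed crossing sequences still separates every terminal pair that the embedded dual separated, since a drawing with crossings has no canonical notion of ``faces'' matching those of an embedding. The paper proves this as Lemma~\ref{L:dualref} using $\ZZ/2\ZZ$-homology, via two results of Chambers, Erickson, and Nayyeri~\cite{cen-mcshc-09}: for each terminal pair, the subgraph of $C$ bounding the face containing $t_1$ is an even graph, its drawn counterpart crosses each edge of~$K$ with the same parity and is therefore homologous to it, and homology class in $S\setminus\{t_1,t_2\}$ characterizes separation. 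You would need an argument of this kind; asserting that the topology determines the faces does not suffice.

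The larger divergence is the dynamic program, and this is where your plan does not go through. The paper decomposes the \emph{abstract dual graph} $C$, not the input: $C$ has $O(g+t)$ vertices and genus at most~$g$, so Lemma~\ref{L:treewidth} gives a path decomposition of width $O(\sqrt{g^2+gt})$; each bag records the $O((g+t)n)$ possible face-locations of its $O(\sqrt{g^2+gt})$ vertices, which immediately gives $n^{O(\sqrt{g^2+gt})}$ states, and the cost of an edge $uv$ of~$C$ given the locations of $u$ and~$v$ is supplied by a shortest-homotopic-path subroutine in $S$ minus the vertices of~$K$~\cite{ce-tnpcs-10}. You instead propose a sphere-cut branch decomposition of the planarized graph $G_D$, with states that record positions of the inside vertices of $\CC$ plus a non-crossing pattern on the noose. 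As you yourself flag, bounding this is the real obstacle, and it is not a small one: recording positions of inside vertices of $\CC$ costs up to $n^{O(g+t)}$ outright (and $\CC$ has $O((g+t)^2)$ boundary leaves and arc endpoints, worsening this), while the noose has combinatorial length $\Theta(\sqrt{n})$ and edges of~$C$ with a prescribed crossing sequence with~$K$ may re-cross it an unbounded number of times, so the homotopy constraints cannot be enforced locally at the noose. The $\sqrt{g^2+gt}$ exponent comes from the smallness and bounded genus of the dual~$C$, and it is $C$ you should decompose, not $G_D$.
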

The following lemma allows to use treewidth techniques for surface-embedded
graphs; it seems to be folklore, and is standard in the planar case.
\begin{lemma}\label{L:treewidth}
  Given a graph~$H$ with $p$ vertices, edges, and faces embedded on a
  surface of genus~$g$, one can compute in $O(p\sqrt p)$ time a path
  decomposition of the graph with width $O(\sqrt{gp})$.
\end{lemma}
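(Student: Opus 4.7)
The plan is to combine two standard ingredients: a pathwidth bound of $O(\sqrt{p})$ for planar graphs obtained via breadth-first layering, and a reduction from the genus-$g$ case to the planar one by cutting along a short collection of non-separating curves.

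First, in the planar case ($g=0$), I would proceed via breadth-first search. Performing a BFS in~$H$ from a carefully chosen root yields layers $L_0, L_1, \ldots, L_k$. By the Lipton--Tarjan planar separator theorem, applied iteratively to the planar ``slab'' bounded by two appropriately chosen BFS layers, one builds a path decomposition in which each bag consists of a consecutive window of BFS layers together with a small separator set; each bag has size $O(\sqrt{p})$, and with standard data structures (for instance, ordering the separator recursion so that siblings are laid out linearly along the spine), the entire construction runs in $O(p\sqrt{p})$ time.

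To lift this to $g>0$, I would first compute a collection of $O(g)$ non-separating cycles on the embedding whose union, once cut, leaves a planar graph~$H'$ of complexity $p + O(\sqrt{gp})$. Using a BFS spanning tree~$T$ of~$H$ and choosing, among the non-tree ``handle'' edges, $O(g)$ edges whose fundamental cycles generate the first homology, one can ensure by a pigeonhole argument on BFS layer sizes that the total combined length of these cycles is $O(\sqrt{gp})$. Let~$X$ denote the vertex set of these cut cycles, so $|X|=O(\sqrt{gp})$. Applying the planar construction to~$H'$ and then adding all of~$X$ to every bag of the resulting path decomposition produces a path decomposition of~$H$ of width $O(\sqrt{p}) + O(\sqrt{gp}) = O(\sqrt{gp})$, within the required time budget.

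The main obstacle is the quantitative length bound on the cutting system: a naive cut graph easily has length $\Omega(p)$. The folklore fix, which is really the heart of the lemma, is to rely on the BFS tree and extract a homology basis from fundamental cycles of well-chosen non-tree edges; a pigeonhole argument on BFS layer sizes produces a basis of total length $O(\sqrt{gp})$ (this mirrors the classical bound on the shortest non-separating cycle in a genus-$g$ graph). Once this length bound is in hand, the rest of the argument --- the planar separator recursion, the augmentation of every bag with~$X$, and the time analysis --- is routine.
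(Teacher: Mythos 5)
Your overall structure matches the paper's: find a small vertex set whose removal planarizes the graph, compute an $O(\sqrt{p})$-width path decomposition of the planar remainder by iterated Lipton--Tarjan separation, and add the planarizing set back into every bag. Where the paper simply invokes Eppstein's theorem~\cite[Theorem~5.1]{e-dgteg-03}, which directly provides a planarizing vertex set $A$ of size $O(\sqrt{gp})$ in linear time, you attempt to reconstruct that bound from scratch via a BFS tree and a homology basis of fundamental cycles. That is the crux, and your sketch of it has a real gap.

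The claim that, from a single BFS tree, one can pick $O(g)$ non-tree edges whose fundamental cycles have total length $O(\sqrt{gp})$ does not follow from a simple pigeonhole on layer sizes. A fundamental cycle of a BFS tree has length up to twice the tree depth; if the BFS tree has depth close to $p$ (which it can), every fundamental cycle is long and no choice among the non-tree edges helps. To get the bound, one either needs to argue that a shallow BFS root exists (false in general), or to interleave with a separator step: when the BFS is deep, a pigeonhole gives a thin layer that cuts off a piece, and one recurses, which is essentially how Eppstein's proof goes; or to iterate a ``find a short non-separating cycle, cut, repeat'' argument, each time on a surface of one lower genus, and then sum $\sum_{i\le g} O(\sqrt{p/i})=O(\sqrt{gp})$ and account for the cost of repeated BFS. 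None of these is quite what you wrote, and the difference matters: without it the total cut length could be $\Theta(gp)$ rather than $O(\sqrt{gp})$. You also skip a small but genuine detail that the paper handles explicitly: $H$ is given as an embedding on~$S$ but is not assumed cellular, so the genus relevant to the separator bound is that of the surface obtained by gluing disks to the facial walks of~$H$, which is at most~$g$ --- this needs to be said before any cut-graph or homology-rank argument is invoked. In short, your plan is correct and essentially the paper's, but you have replaced the paper's one citation with a sketch that does not yet constitute a proof of the key quantitative step.
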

\begin{proof}
  We start by removing loops and multiple edges, in time~$O(p)$.  Note that
  the resulting graph (still denoted by~$H$) is, as the original one, not
  necessarily cellularly embedded on the surface~$S$ of genus~$g$.
  However, the combinatorial map given by~$H$ defines a cellular embedding
  of~$H$ on the surface~$S'$ obtained by pasting a disk to each facial
  walk of~$H$ (this idea is recurrent, e.g., in Mohar and
  Thomassen~\cite{mt-gs-01}).  The genus~$g'$ of~$S'$ is at most that
  of~$S$.  (Indeed, $H$ could be augmented to a graph~$H'$ cellularly
  embedded on~$S$ by adding edges, without increasing the number of
  vertices and faces; the cellular embeddings $H'$ on~$S$ and $H$ on~$S'$
  have the same number of vertices and faces, but the first one has more
  edges; so $S$ has genus larger than~$S'$, by Euler's formula.)

  Let $h$ be the number of vertices of~$H$.  Given the embedding of~$H$
  on~$S'$, one can compute in $O(h)$ time a small, balanced, planar
  separator~$A$ for~$H$: a set of $O(\sqrt{gh})$ vertices whose removal
  leaves a planar graph with no connected component with more than $2h/3$
  vertices~\cite[Theorem~5.1]{e-dgteg-03}.  (That theorem states only the
  existence of the separator, but the proof immediately gives a linear-time
  algorithm, as mentioned in the remark right after its proof.)

  We then compute a path decomposition of the planar graph~$H':=H-A$ of
  width $O(\sqrt{h'})$ in $O(h'\sqrt{h'})$ time, where $h'$ is the number
  of vertices of~$H'$, using standard techniques.  For example, one can
  compute a small, balanced, planar separator~$B$ for $H'$ made of at most
  $c_1\sqrt{h'}$ vertices, for some constant~$c_1$~\cite{lt-stpg-79}; by
  induction, we can assume that each component of~$H'-B$ has a path
  decomposition of width at most $2c_2h'/3$, for some constant~$c_2$ to be
  chosen later; concatenating these paths arbitrarily and adding~$B$ to all
  the nodes of the resulting path gives a path decomposition of~$H'$ of
  width at most $2c_2h'/3+c_1\sqrt{h'}$, which is at most $2c_2h'$ if $c_2$
  is chosen large enough.  By induction, it takes $O\big(h'\sqrt{h'}\big)$
  time to compute this path decomposition explicitly.  (Using more advanced
  techniques~\cite{g-psppt-95}, and if one is willing to accept an implicit
  representation of the path decomposition, one can achieve the same result
  in linear time, but we do not care about this, since this will not
  improve the running time of the overall algorithm.)

  Finally, adding $A$ to each node of the path decomposition of $H'=H-A$
  gives a path decomposition of~$H$ of width $O(\sqrt{gp})$ in $O(p\sqrt
  p)$ time.
\end{proof}

Let $C$ be a good graph.  The \emphdef{crossing sequence} of an edge~$e$
of~$C$ (directed arbitrarily) is the ordered sequence of edges in~$K$
crossed by~$e$ when walking along~$e$, together with the indication of the
orientation of each crossing (more precisely, on which side of the edge
of~$K$ lies the part of~$e$ before the crossing, and on which side lies the
part of~$e$ after the crossing).  Given the topology of~$C$, one can
determine the crossing sequence of every edge of~$C$.  We say that a
drawing~$C'$ of the (abstract) graph~$C$ has the same \emphdef{topology}
as~$C$ if each edge of~$C'$ has the same crossing sequence as the
corresponding edge of~$C$.
\begin{lemma}\label{L:dualref}
  Let $C'$ be a drawing of a multicut dual~$C$ with the same topology
  as~$C$.  Then, the set of edges of~$G$ crossed by~$C'$ is a multicut.
\end{lemma}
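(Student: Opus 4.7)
My plan is to argue by contradiction: if the set $E''$ of edges of~$G$ crossed by~$C'$ were not a multicut, then for some $\{t_1,t_2\}\in R$ there would exist a path~$p$ in $G\setminus E''$ joining $t_1$ to~$t_2$.  Viewed as a curve~$\gamma$ on~$S$, $p$ is disjoint from~$C'$; by a tiny perturbation within the faces of~$G$, I may further arrange that the interior of~$\gamma$ avoids the finite set consisting of~$T$ together with the vertices of~$C$.

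The heart of the proof is to show that for every edge~$e$ of~$C$, writing $e'$ for the corresponding edge of~$C'$, one has $|\gamma\cap e|\equiv|\gamma\cap e'|\pmod 2$.  I would establish this by producing a homotopy rel endpoints from~$e$ to~$e'$ whose image lies entirely in $S\setminus T$.  Since $e$ and~$e'$ have the same crossing sequence with~$K$, cutting $S$ along~$K$ into the disk~$D$ decomposes them into matched pieces whose endpoints lie on corresponding sides of~$\partial D$ (possibly after a preliminary homotopy that slides crossings along the interiors of the edges of~$K$, which avoid~$T$).  Each matched pair of pieces is homotopic rel endpoints inside $D\setminus T$: this space is contractible, since $D$ is a disk and $T$ consists of finitely many boundary points, so $D\setminus T$ deformation-retracts to an interior point of~$D$.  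Reassembling these piecewise homotopies gives the desired homotopy of~$e$ to~$e'$ on~$S$, with image in $S\setminus T$.  The endpoints of~$e$ are vertices of~$C$, hence not in~$T$, while $\gamma$'s endpoints $t_1,t_2\in T$ are never touched by the homotopy; a standard bordism argument then yields the claimed mod-$2$ invariance.  Since $\gamma$ is disjoint from~$C'$, it follows that $\gamma$ crosses every edge of~$C$ an even number of times.

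To derive the contradiction, I would pass to the dual graph of~$C$ on~$S$: its vertices are the faces of~$C$, and every edge of~$C$ contributes a dual edge joining the two incident faces (a self-loop when these coincide).  The curve~$\gamma$ traces a walk in this dual graph from the face $F_1$ containing~$t_1$ to the face $F_2$ containing~$t_2$, and $F_1\ne F_2$ since $C$ is a multicut dual.  Let $c$ be the $\ZZ/2$-chain of dual edges traversed by this walk.  A telescoping computation (self-loops contributing~$0$) gives $\partial c = F_1 + F_2 \ne 0$, so $c \ne 0$; consequently some edge of~$C$ is crossed by~$\gamma$ an odd number of times, contradicting the parity conclusion of the previous paragraph.

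The main obstacle is the homotopy step, in particular verifying that the homotopy from~$e$ to~$e'$ can be realized within $S\setminus T$, since this is what makes mod-$2$ intersection with~$\gamma$ actually a homotopy invariant; once this is in hand, the rest is a routine parity argument in the dual graph.
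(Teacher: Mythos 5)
Your argument has a genuine gap at the step ``producing a homotopy rel endpoints from~$e$ to~$e'$.'' The hypothesis ``same topology'' only requires each edge of~$C'$ to have the same \emph{crossing sequence} with~$K$ as the corresponding edge of~$C$; it imposes no constraint on the locations of the vertices. Indeed, in the algorithm of Proposition~\ref{P:enumvert}, the vertices of~$C'$ are placed at arbitrary faces of the overlay of $G$ and~$K$, typically different from those of~$C$. When the endpoints of~$e$ and~$e'$ differ, there is no homotopy rel endpoints, and if one instead allows a free homotopy in which the endpoints trace paths from the vertices of~$C$ to those of~$C'$, the bordism argument only yields $|\gamma\cap e|+|\gamma\cap e'|\equiv |\gamma\cap\alpha_u|+|\gamma\cap\alpha_v|\pmod 2$, where $\alpha_u,\alpha_v$ are the endpoint traces. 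Nothing forces those traces to avoid~$\gamma$, so the edge-by-edge parity claim does not follow. And it is in fact false without further hypotheses: if $C$ is a two-edge cycle $e_1\cup e_2$ enclosing $t_1$ but not $t_2$, any arc~$\gamma$ from $t_1$ to~$t_2$ crosses $e_1\cup e_2$ an odd number of times, so one of $|\gamma\cap e_1|, |\gamma\cap e_2|$ is odd; yet one can draw~$C'$ with identical crossing sequences but displaced vertices so that $\gamma$ misses~$C'$ entirely.

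The repair is precisely to aggregate over an \emph{even} subgraph rather than work edge by edge: in the even subgraph $C_1$ of edges incident exactly once to the face of~$C$ containing~$t_1$, every vertex has even degree, so the endpoint-trace contributions come in pairs and cancel mod~$2$. That is the content of the paper's $\ZZ/2$-homology argument via Chambers--Erickson--Nayyeri (Lemmas~3.1 and~3.4 of~\cite{cen-mcshc-09}), which asserts exactly that matching crossing parities with~$K$ force $C_1$ and~$C'_1$ to be homologous, and that separation of $t_1$ from~$t_2$ is a homological property. Your closing dual-graph computation is then unnecessary once one has the crossing-parity statement for~$C_1$: it already says directly that $\gamma$ must cross $C'_1$, hence $C'$, an odd number of times.
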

\begin{proof}
  Let $\set{t_1,t_2}\in R$ be a pair of terminals, and let $f$ be the face
  of~$C$ containing~$t_1$.  The set of edges of~$C$ that are incident
  exactly once to the face of~$C$ containing~$t_1$ forms an \emph{even}
  subgraph~$C_1$ of~$C$, in which every vertex has even degree.  Moreover,
  $C_1$ separates $t_1$ from~$t_2$.  Let $C'_1$ be the drawing of the same
  subgraph in~$C'$.  To prove our result, it suffices to prove that $C'_1$
  also separates $t_1$ from~$t_2$, using the fact that the crossing
  sequences are the same in~$C_1$ and~$C'_1$.  Although intuitive, this
  fact is non-trivial, and its proof relies on two results involving
  homology.  We will use these results as a black box; in particular, no
  knowledge of homology is required here.

  We can assume that $C'$ has a finite number of self-intersection points,
  each of which is a crossing.  Therefore, $C'_1$ can be seen as an even
  graph embedded on~$S$ (by adding a new vertex at each crossing between
  two edges of~$C'_1$).  Our lemma is implied by two results by Chambers et
  al.~\cite{cen-mcshc-09}, reformulated here in our setting (that paper
  only considers orientable surfaces, but the two results we use extend
  immediately to non-orientable surfaces):
  \begin{itemize}
  \item if, for every edge~$e$ of~$K$, the even graphs $C_1$ and~$C'_1$
    cross $e$ with the same parity, then they are homologous
    (over~$\ZZ/2\ZZ$)~\cite[Lemma~3.4]{cen-mcshc-09}.  In our case, since the
    crossing sequences are equal, $C_1$ and~$C'_1$ are homologous;
  \item an even graph separates $t_1$ from~$t_2$ if and only if it is
    homologous, on the surface $S\setminus\{t_1,t_2\}$, to a small circle
    around~$t_1$~\cite[Lemma~3.1]{cen-mcshc-09}.  Thus, since $C_1$
    separates $t_1$ from~$t_2$, it is also the case for~$C'_1$.\qedhere
  \end{itemize}
\end{proof}

\begin{proof}[Proof of Proposition~\ref{P:enumvert}]
  Lemma~\ref{L:dualref} implies that it suffices to compute (the set of
  edges of~$G$ crossed by) a shortest drawing with the given topology.  

  Let us first explain how to achieve this, assuming that the locations of
  the vertices are prescribed.  Let $C$ be any multicut dual with that
  topology and these vertex locations; we do not know~$C$, but know the
  crossing sequence of its edges (they are determined by the topology) and
  the vertex locations (they are prescribed).  To achieve our goal, it
  suffices, for each edge~$e$ of~$C$, to compute a shortest path with the
  same crossing sequence as~$e$ and the same endpoints.  We remark that
  algorithms for computing a shortest homotopic path in $S$ minus the
  vertex set of~$K$ precisely achieve this goal~\cite{ce-tnpcs-10}: In
  short, we glue copies of the disk~$D$ according to the specified crossing
  sequence, in such a way that edge~$e$ ``lifts'' to that space~$\hat S_e$;
  then we compute a shortest path in~$\hat S_e$ connecting the same
  endpoints as that lift, and ``project'' back to~$S$.  (We can assign
  infinitesimal crossing weights to the edges of~$K$ to ensure that the
  crossing sequences of~$e$ and~$e'$ are the same.)  The complexity of~$D$
  (with its internal structure defined by the edges of~$G$) is $O((g+t)n)$,
  and the crossing sequences have total length $O((g+t)^2)$, so the total
  complexity of the spaces~$\hat S_e$ is $O((g+t)^3n)$.  Since $\hat S_e$
  is planar, and since shortest paths can be computed in linear time in
  planar graphs~\cite{hkrs-fspap-97}, this is also the complexity of
  computing (the set of edges of~$G$ crossed by) a shortest graph drawing
  with a given topology and specified vertex locations.

  To compute a shortest drawing with the given topology, over all choices
  of vertex locations, we can na\"\i{}vely enumerate all possible locations
  of the $O(g+t)$ vertices.  Note that it is only relevant to consider
  which face of the overlay of $G$ and~$K$ each vertex belongs to, and
  there are $O((g+t)n)$ such faces.  This yields an
  $n^{O(g+t)}$-time algorithm.  To get a better running time,
  we use treewidth techniques, also used by Klein and Marx in the planar
  multiway cut case~\cite{km-spktc-12}.  The (abstract) graph~$C$ defined
  by the specified topology has $O(g+t)$ vertices and is embedded on a
  surface with genus~$g$.  Lemma~\ref{L:treewidth} gives us, in
  $(g+t)^{O(1)}$ time, a path decomposition of~$C$ of width
  $O\big(\sqrt{g^2+gt}\big)$.  We use standard dynamic programming on the
  path decomposition (rooted, e.g., at one of its endpoints).  More
  precisely, at each node~$N$ of the path, we have a table that indicates,
  for every choice of the locations of the vertices in~$N$, the length of
  the shortest drawing of the subgraph of~$C$ induced by the vertices
  in~$N$ and its descendents, among those that respect the crossing
  sequence constraints.  We can fill in the tables by a bottom-up traversal
  of the path decomposition.  Since each node contains
  $O\big(\sqrt{g^2+gt}\big)$ vertices, the running time of the algorithm is
  $n^{O\big(\sqrt{g^2+gt}\big)}$.
\end{proof}
We can now conclude the proof of Theorem~\ref{T:main}:
\begin{proof}[Proof of Theorem~\ref{T:main}]
  We compute the cut graph~$K$ in $O(n\log n+(g+t)n)$ time
  (Proposition~\ref{P:cutgraph}), and enumerate all valid topologies in
  $(g+t)^{O(g+t)}$ time (Proposition~\ref{P:enumtop}).  For each valid
  topology, we apply the algorithm of Proposition~\ref{P:enumvert} in
  $n^{O\big(\sqrt{g^2+gt}\big)}$ time, and return a shortest multicut
  found.  Therefore, the overall running time is
  $O({(g+t)^{O(g+t)}}\*{n^{O\big(\sqrt{g^2+gt}\big)}}$.  The correctness is
  easy: By Proposition~\ref{P:dual}, it suffices to compute a multicut
  whose weight is at most the length of any multicut dual.  By
  Proposition~\ref{P:struct}, some shortest multicut dual has a valid
  topology; when this topology is chosen in the course of the algorithm,
  Proposition~\ref{P:enumvert} guarantees that we have computed a shortest
  multicut.
\end{proof}

\section*{Acknowledgments}

Many thanks to C\'edric Bentz for several helpful discussions on his
manuscript~\cite{b-ptapm-12}.  Thanks also to Claire Mathieu for an
inspiring discussion in the preliminary stage of this paper, to Arnaud de
Mesmay for a useful remark, and to the anonymous referees for their
detailed reports, one of them suggesting that an improvement might be
possible with the treewidth technique of Klein and Marx~\cite{km-spktc-12}.


\appendix

\section{Problems with Bentz' Approach in the Planar Case}\label{A:bentz}

Here, we show several flaws in Bentz' approach~\cite{b-ptapm-12} for the
planar version of the problem, which leaves little hope for repair.  Bentz
starts by guessing the \emph{clusters}, namely, the partition of the
terminals induced by the faces of the optimal multicut dual, and enumerates
all possibilities for the locations of the vertices of the multicut dual
that have degree at least three.  (He does so in the dual graph, while we
do it in the cross-metric setting to keep track of the crossings between
the multicut dual and the cut graph, but this is the same idea.)  Then, his
strategy is to consider all the possible homotopy classes of the closed
walks~$C_i$ that bound a face of the multicut dual containing the terminals
of a given cluster~$i$.  Unfortunately, it is dubious that this strategy
could work; at least, the enumeration of the homotopy classes has to be
substantially more complicated than described in his article, and one would
need to consider the homotopy classes of the edges of the multicut dual,
not only of the closed walks~$C_i$.  In more detail:
\begin{figure}[b]
  \centerline{%
  \begin{inkfragenv}\def\svgwidth{.8\linewidth}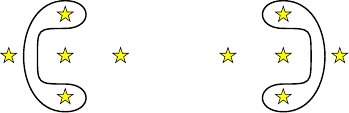\end{inkfragenv}%
}
  \caption{Two non-homotopic cycles enclosing the same set of
    terminals.}
  \label{F:counterex-lemma3}
\end{figure}
\begin{figure}
  \centerline{%
  \begin{inkfragenv}\def\svgwidth{.8\linewidth}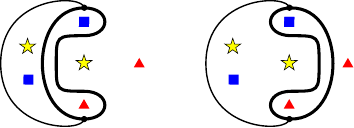\end{inkfragenv}%
}
  \caption{In a multicut dual, replacing a cycle (in thick lines) with
    another cycle passing through the same vertices and enclosing the same
    terminals does not always give a multicut dual.}
  \label{F:counterex-lemma5}
\end{figure}
\begin{itemize}
\item In Lemma~3, it is claimed that two cycles are homotopic in the plane
  minus a set of terminals if and only if they enclose the same terminals
  in their interior.  This is not the case (see
  Figure~\ref{F:counterex-lemma3}).  In particular, as soon as there is
  more than two terminals, the number of homotopy classes is infinite;
\item Lemma~5 claims the following: Assume that $C_i$ passes through
  vertices $\omega_1,\ldots,\omega_{q_i}$ of the multicut dual.  Then we
  still have a multicut dual if we replace $C_i$ with any cycle~$C'_i$
  going through $\omega_1,\ldots,\omega_{q_i}$ that encloses the same set
  of terminals as~$C_i$.  This is not the case, as
  Figure~\ref{F:counterex-lemma5} demonstrates.  (Actually, Lemma~5 makes a
  slightly stronger claim, which is also contradicted by that figure.)  For
  the same reason, Corollary~1 does not hold.
\end{itemize}
These counterexamples show that it is \emph{not} sufficient to determine
whether cycles are valid (can be part of a multicut dual) based only on
which terminals they enclose (although this would indeed be very nice,
since the number of such possibilities is $n^{O(t)}$).  We really need to
consider the homotopy classes of the paths, and to bound their number.
This is exactly the purpose of Lemma~\ref{L:crossingbound} in our
technique.

\end{document}